\newtheorem{thm}{Theorem}
\newtheorem{cor}{Corollary}
\newtheorem{prop}{Proposition}
\newtheorem{defi}{Definition}
\newtheorem{lem}{Lemma}
\theoremstyle{remark}
\newtheorem{remark}{Remark}
\theoremstyle{question}
\newtheorem{question}{Question}
\newcommand{\R}{\mathbb{R}}
\newcommand{\C}{\mathbb{C}}
\newcommand{\Z}{\mathbb{Z}}
\newcommand{\N}{\mathbb{N}}
\renewcommand{\P}[1]{\,\mathrm{P}#1}
\newcommand{\E}{\operatorname{E}}
\newcommand{\dif}{\operatorname{d}}
\newcommand{\vct}[1]{\boldsymbol{#1}}
\newcommand{\mtx}[1]{\boldsymbol{#1}}
\newcommand{\T}{\mathrm{T}}
\newcommand{\set}[1]{\mathbb{#1}}
\newcommand{\eps}{\epsilon}
\newcommand{\calI}{\mathcal{I}}
\newcommand{\calS}{\mathcal{S}}
\newcommand{\calT}{\mathcal{T}}
\newcommand{\calB}{\mathcal{B}}
\newcommand{\calN}{\mathcal{N}}
\newcommand{\calF}{\mathcal{F}}
\newcommand{\calA}{\mathcal{A}}
\newcommand{\calO}{\mathcal{O}}
\newcommand{\calX}{\mathcal{X}}
\newcommand{\calH}{\mathcal{H}}
\newcommand{\ve}{\vct{e}}
\newcommand{\vt}{\vct{t}}
\newcommand{\vx}{\vct{x}}
\newcommand{\vy}{\vct{y}}
\newcommand{\vlambda}{\vct{\lambda}}
\newcommand{\mB}{\mtx{B}}
\newcommand{\mF}{\mtx{F}}
\newcommand{\mH}{\mtx{H}}
\newcommand{\mQ}{\mtx{Q}}
\newcommand{\mS}{\mtx{S}}
\newcommand{\setA}{\set{A}}
\newcommand{\setB}{\set{B}}
\newcommand{\setG}{\set{G}}
\newcommand{\setM}{\set{M}}
\newcommand{\setS}{\set{S}}
\newcommand{\setU}{\set{U}}
\newcommand{\setW}{\set{W}}
\newlength{\imgwidth}
\newcommand{\revise}[1]{\textcolor{black}{#1}}
\newcommand{\twoCol}[2]{\ifthenelse{\boolean{twoColVersion}} {#1} {#2} }
\title[Time-Limited Toeplitz Operators]{Time-Limited Toeplitz Operators on Abelian Groups:
Applications in Information Theory and Subspace Approximation}
\author[Z. Zhu]{Zhihui Zhu}
\address[Z. Zhu]{Department of Electrical and Computer Engineering, University of Denver, Denver, CO 80208 USA}
\email{{\tt zhihui.zhu@du.edu}}
\author[M. Wakin]{Michael B. Wakin}
\address[M. Wakin]{Department of Electrical
Engineering, Colorado School of Mines, Golden, CO 80401 USA}
\email{\tt mwakin@mines.edu}
\keywords{Toeptliz operators, time-frequency analysis, subspace approximation, eigenvalue distribution}
\subjclass[2010]{47B35, 47N70, 94A12}
\begin{document}

\begin{abstract}
Toeplitz operators are fundamental and ubiquitous in signal processing and information theory as models for linear, time-invariant (LTI) systems. Due to the fact that any practical system can access only signals of finite duration, time-limited restrictions of Toeplitz operators are naturally of interest. To provide a unifying treatment of such systems working on different signal domains, we consider time-limited Toeplitz operators on locally compact abelian groups with the aid of the Fourier transform on these groups. In particular, we survey existing results concerning the relationship between the spectrum of a time-limited Toeplitz operator and the spectrum of the corresponding non-time-limited Toeplitz operator. We also develop new results specifically concerning the eigenvalues of time-frequency limiting operators on locally compact abelian groups. Applications of our unifying treatment are discussed in relation to channel capacity and in relation to representation and approximation of signals.
\end{abstract}

\maketitle


\section{Introduction}

This paper deals with generalizations of certain concepts from elementary signals and systems analysis, which we first review.

\subsection{Spectral analysis of linear, time-invariant systems}
\label{sec:LTIintro}

Linear, time-invariant (LTI) systems are ubiquitous in signal processing and control theory, and it is well known that the output of a continuous-time (CT) LTI system with input signal $x(t)$ can be computed using the convolution integral
\begin{align}\label{eq:convolution}
y(t) = (x \ast h) (t) = \int_{\tau=-\infty}^\infty h(t - \tau) x(\tau)\dif \tau,
\end{align}
where $h(t)\in L_2(\R)$ is the impulse response of the system. 
 Such a system can equivalently be viewed as a linear operator $\calH:L_2(\R)\rightarrow L_{\revise{\infty}}(\R)$,
where
\begin{align}\label{eq:CT toeplitz}
\calH(x)(t) = \int_{\tau=-\infty}^\infty h(t - \tau) x(\tau)\dif \tau.
\end{align}
Because this linear operator involves a kernel function $h(t-\tau)$ that depends only on the difference $t-\tau$, we refer to it as a {\em Toeplitz operator}.\footnote{Our notion of Toeplitz operators follows from the definition of Toeplitz operators in~\cite[Section 7.2]{grenander1958toeplitz}.} In this setting, the behavior of the Toeplitz operator can be naturally understood in the frequency domain: for an input signal $x(t)$ with continuous-time Fourier transform (CTFT)
\begin{align}
\widehat x(F) = \int_{t=-\infty}^\infty x(t)e^{-j2\pi Ft} \dif t, \forall \ F\in \R,
\label{eq:FTx}\end{align}
the CTFT of the output signal $y(t)$ will satisfy $\widehat y(F) = \widehat x(F)\widehat h(F)$, where $\widehat h(F)$ denotes the CTFT of the impulse response $h(t)$ and is also known as the {\em frequency response} of the system. The spectrum of the Toeplitz operator $\calH$ also coincides with $\widehat h(F)$ \cite[Section 7.2]{grenander1958toeplitz}. Note that the spectrum of a linear operator $\calH$, a generalization of the set of eigenvalues of a matrix, is the set of complex numbers $\lambda$ such that $\calH - \lambda \calI$ (where $\calI$ denotes the identity operator) is not invertible.

Similar facts hold for discrete-time (DT) LTI systems, where the response to an input signal $x[n]$ is given by the convolution
\begin{align}\label{eq:discrete convolution}
y[n] = \sum_{m=-\infty}^{\infty} h[n- m] x[m],
\end{align}
where $h[n]\in \ell_2(\Z)$ is the impulse response of the DT system. Such a system can equivalently be viewed as a linear operator $H:\ell_2(\Z)\rightarrow \ell_2(\Z)$, which corresponds to multiplication of the input signal $x \in \ell_2(\Z)$ by the bi-infinite Toeplitz matrix
\begin{align}
H=\left[\begin{array}{ccccc}\ddots & \ddots & \ddots & \ddots & \ddots \\
\ddots & h[0] & h[-1] & h[-2]  & \ddots \\
\ddots & h[1] & h[0] & h[-1] &  \ddots\\
\ddots & h[2] & h[1] & h[0] &  \ddots\\
\ddots & \ddots & \ddots & \ddots & \ddots
\end{array}\right].
\label{eq:DT toeplitz}\end{align}
We note that $H[m,n] = h[m-n] $ for all $m,n \in \Z$. The behavior of this system can also be interpreted as multiplication in the discrete-time Fourier transform (DTFT) domain where the DTFT of the impulse response $h[n]$ is defined as:
\begin{align}
\widehat h(f) = \sum_{n=-\infty}^\infty h[n]e^{-j2\pi fn}.
\label{eq:DTFT}
\end{align}
\revise{The spectrum of  $H$ also coincides with $\widehat h(f)$ \cite{grenander1958toeplitz}.}

\subsection{The effects of time-limiting}
\label{sec:effects of time-limiting}

Practical systems do not have access to input or output signals of infinite duration, which motivates the study of time-limited versions of LTI systems. Consider for example the situation where a CT system zeros out an input signal outside the interval $[0,T]$. (Or similarly, the system may pad with zeros an input signal that was originally compactly supported on $[0,T]$.) The system then computes the convolution shown in \eqref{eq:CT toeplitz} and finally time-limits the output signal to the same interval $[0,T]$. For such a situation we may define a new linear operator $\calH_T: L_2(\R) \rightarrow L_2(\R)$ (a ``time-limited'' version of $\calH$), where
\begin{align}\label{eq:CTTL}
\calH_T(x)(t) = \begin{cases} \int_{\tau=0}^{T} h(t - \tau) x(\tau) \dif \tau, & t \in [0,T] \\ 0, & \text{otherwise}. \end{cases}
\end{align}

An analogous time-limited version of $H$ (from~\eqref{eq:DT toeplitz}) may be defined for DT systems. Supposing that the input and output signals are time-limited to the index set $\{0,1,\dots,N-1\}$, we define the $N\times N$ Toeplitz matrix\footnote{Through the paper, finite-dimensional vectors and matrices are indicated by bold characters and we index such vectors and matrices beginning at $0$.} $\mH_N$ as
\begin{align}
\mH_N[m,n] = h[m-n], \quad \forall \ 0 \le m,n \le N-1.
\label{eq:FDT toeplitz}\end{align}
Such a matrix can also be viewed as a linear operator on $\C^N$.

A natural question is: {\em What effect do the time-limiting operations have on the system behavior?} More precisely, how similar is the spectrum of $\calH_T$ to that of $\calH$, and in what sense do the eigenvalues of $\calH_T$ converge to the frequency response $\widehat h(F)$ as $T \rightarrow \infty$?
Here $\calH_T$ is compact and thus has a discrete spectrum containing what we refer to as its eigenvalues; the number of eigenvalues is countable by the spectral theorem for compact operators\cite{conway2019course}.
Analogously, how similar is the spectrum of $\mH_N$ to that of $H$, and in what sense do the eigenvalues of $\mH_N$ converge to the frequency response $\widehat h(f)$ as $N \rightarrow \infty$? As we discuss, the answers to such questions provide insight into matters such as the capacity (or effective bandwidth) of time-limited communication channels and the number of degrees of freedom (or effective dimensionality) of certain related signal families. Answering these questions relies on deeper insight into the spectrum of Toeplitz operators.


\subsubsection{Toeplitz and time-limited Toeplitz operators}

In this paper, we distinguish between Toeplitz operators (such as $\calH$ and $H$) and time-limited Toeplitz operators such as $\calH_T$ and $\mH_N$.\footnote{Our usage of these terms is consistent with the terminology in~\cite[Section 7.2]{grenander1958toeplitz}, although in that work time-limited Toeplitz operators are referred to as {\em finite} Toeplitz operators.} We focus primarily on Toeplitz and time-limited Toeplitz operators that are Hermitian, i.e., $h(-t) = h^*(t)$ for $\calH$ and $\calH_T$ and $h[-n] = h^*[n]$ for $H$ and $\mH_N$.

We note that finite size Toeplitz {\em matrices} (such as $\mH_N$) are of considerable interest in statistical signal processing and information theory~\cite{grenander1958toeplitz,gray1972asymptotic,pearl1973coding,makhoul1975linear,kailath1978inverses}. The covariance matrix of a random vector obtained by sampling a wide-sense stationary (WSS) random process is an example of such a matrix. More general Toeplitz operators have been extensively studied since O.~Toeplitz and C.~Carath\'{e}odory~\cite{toeplitz1911theorie,caratheodory1911variabilitatsbereich}; see \cite{grenander1958toeplitz} for a very comprehensive review. Time-limited convolutions are also important in machine learning and computer vision. As an example, modern convolutional neural networks (CNNs)---which have demonstrated excellent performance in numerous computer vision tasks \cite{krizhevsky2012imagenet}---contain large numbers of convolutional layers, each of which mainly involves  two-dimensional convolution (that can be written as a doubly block circulant matrix, which is approximately Toeplitz) applied to the input.

Unfortunately, there are no simple formulas for the eigenvalues of time-limited Toeplitz operators such as $\calH_T$ and $\mH_N$. This stands in contrast to the operators $\calH$ and $H$, whose spectrum was given simply by the frequency response of the corresponding LTI system. Notably, although the discrete Fourier transform (DFT) is the canonical tool for frequency analysis in $\C^N$, the DFT basis vectors (complex exponentials of the form $e^{j2\pi \frac{nk}{N}}$ with $k \in \{0,1,\dots,N-1\}$) do not, in general, constitute eigenvectors of the matrix $\mH_N$, unless this matrix is {\em circulant} in addition to being Toeplitz. Consequently, the spectrum of $\mH_N$ cannot in general be obtained by taking the DFT of the time-limited impulse response $\{h[0],h[1],\dots,h[N-1]\}$.

Fortunately, in many applications it is possible to relate the eigenvalues of a time-limited Toeplitz operator to the eigenvalues of the original (non-time-limited) Toeplitz operator, thus guaranteeing that certain essential behavior of the original system is preserved in its time-limited version. We discuss these connections, as well as their applications, in more detail in the following subsections.

\subsubsection{Time-frequency limiting operators}
\label{sec:tfintro}

Shannon introduced the fundamental concept of capacity in the context of communication in \cite{shannon2001mathematical}, in which we find the answers to questions such as the capacity of a CT band-limited channel which operates substantially limited to a time interval $[0,T]$. In \cite{shannon2001mathematical}, the answer was derived in a probabilistic setting, while another notation of $\epsilon$-capacity was introduced by Kolmogorov in \cite{tikhomirov1993varepsilon} for approaching a similar question in  the deterministic setting of  signal (or functional) approximation. The functional approximation approach was further investigated by Landau, Pollak, and Slepian, who wrote a series of seminal papers exploring the degree to which a band-limited signal can be {\em approximately} time-limited~\cite{SlepiP_ProlateI,LandaP_ProlateII,LandaP_ProlateIII,Slepi_ProlateIV,Slepian78DPSS}. Recently, Lim and Franceschcetti~\cite{lim2014deterministic,lim2017information} provided a connection between Shannon's capacity from the probabilistic setting and Kolmogorov's capacity from the deterministic setting when communication occurs using band-limited functions.

To give a precise description, consider the case of a CT Toeplitz operator $\calH$ (as in~\eqref{eq:CT toeplitz}) that corresponds to an {\em ideal low-pass filter}. That is, $\calH =  \calB_W$, where $\calB_W: L_2(\R)\rightarrow L_2(\R)$ is a band-limiting operator that takes the CTFT of an input function on $L_2(\R)$, sets it to zero outside $[-W,W]$ and then computes the inverse CTFT. The impulse response of this system is given by the sinc function $h(t) = \frac{\sin(2\pi Wt)}{\pi W t}$, and the
frequency response of this system $\widehat h(F)$ is simply the indicator function of the interval $[-W,W]$.

Similarly, define $\calT_T: L_2(\R)\rightarrow L_2(\R)$ to be a time-limiting operator that sets a function to zero outside $[0,T]$, and finally consider the time-limited Toeplitz operator $\calH_T = \calT_T \calH \calT_T = \calT_T \calB_W \calT_T$. Observe that $\calH_T$ can be viewed as a composition of time- and band-limiting operators.

The eigenvalues of $\calT_T\calB_W\calT_T$ were extensively investigated in \cite{SlepiP_ProlateI,LandaP_ProlateII}, which discuss the ``lucky accident'' that $\calT_T\calB_W\calT_T$ commutes with a certain second-order differential operator whose eigenfunctions are a special class of functions---the {\em prolate spheroidal wave functions} (PSWFs).

The eigenvalues of the corresponding composition of time- and band-limiting operators in the discrete case, a Toeplitz matrix $\mH_N$ whose entries are samples of a digital sinc function, were studied by Slepian in~\cite{Slepian78DPSS}. The eigenvectors of this matrix are time-limited versions of the {\em discrete prolate spheroidal sequences} (DPSSs) which, as we discuss further in Section~\ref{sec:tfappreview}, provide a highly efficient basis for representing sampled band-limited signals and have proved to be useful in numerous signal processing applications.

In both the CT and DT settings, the eigenvalues of the time-limited Toeplitz operator exhibit a particular behavior: when sorted by magnitude, there is a cluster of eigenvalues close to (but not exceeding) $1$, followed by an abrupt transition, after which the remaining eigenvalues are close to $0$. This crudely resembles the rectangular shape of the frequency response of the original band-limiting operator. Moreover, the number of eigenvalues near $1$ is \revise{approximately equal to} the time-frequency area (which equals $2TW$ in the CT example above).  More details on these facts, including a complete treatment of the DT case, are provided in Section~\ref{sec:TF operator on group}.



\subsubsection{Szeg{\H o}'s theorem}\label{sec:conventional Szego}

For more general Toeplitz operators---beyond ideal low-pass filters---the eigenvalues of the corresponding time-limited Toeplitz operators can be described using Szeg{\H o}'s theorem.

We describe this in the DT case. Consider a DT Hermitian Toeplitz operator $H$ which corresponds to a system with frequency response $\widehat h(f)$, as described in~\eqref{eq:DT toeplitz} and~\eqref{eq:DTFT}. For $N \in \N$, let $\mH_N$ denote the resulting time-limited Hermitian Toeplitz operator, as in~\eqref{eq:FDT toeplitz}, and let the eigenvalues of $\mH_N$ be arranged as $\lambda_0(\mH_N)\geq \cdots\geq \lambda_{N-1}(\mH_N)$. Suppose $\widehat h\in L_{\infty}([0,1])$. Then Szeg\H{o}'s theorem~\cite{grenander1958toeplitz} describes the collective asymptotic behavior (as $N\rightarrow \infty$) 
of the eigenvalues of the sequence of Hermitian Toeplitz matrices $\{\mH_N\}$ as
\begin{equation}
\lim_{N\rightarrow \infty}\frac{1}{N}\sum_{l=0}^{N-1}\vartheta(\lambda_l(\mH_N)) = \int_0^{1}\vartheta(\widehat{h}(f))df,
\label{eq:Szego thm}\end{equation}
where $\vartheta$ is any function continuous on the range of $\widehat h$. As one example, choosing $\vartheta (x) = x$ yields
\begin{align*}
\lim_{N\rightarrow \infty}\frac{1}{N}\sum_{l=0}^{N-1}\lambda_l(\mH_N) = \int_0^{1}\widehat h(f)df.
\end{align*}
In words, this says that as $N\rightarrow \infty$, the average eigenvalue of $\mH_N$ converges to the average value of the frequency response $\widehat{h}(f)$ of the original Toeplitz operator $H$. As a second example, suppose $\widehat h(f)>0$ (and thus $\lambda_l(\mH_N)>0$ for all $l\in \{0,1,\dots,N-1\}$ and $N \in \N$) and let $\vartheta$ be the $\log$ function. Then Szeg\H{o}'s theorem indicates that
\begin{align*}
\lim_{N\rightarrow \infty}\frac{1}{N}\log\left(\det\left( \mH_N\right) \right) = \int_0^{1}\log\left(\widehat h(f)\right) df.
\end{align*}
This relates the determinants of the Toeplitz matrices $\mH_N$ to the frequency response $\widehat{h}(f)$ of the original Toeplitz operator $H$.

Szeg\H{o}'s theorem has been widely used in the areas of signal processing, communications, and information theory. A paper and review by Gray~\cite{gray1972asymptotic,gray2005toeplitz} serve as a remarkable elementary introduction in the engineering literature and offer a simplified proof of Szeg\H{o}'s original theorem. The result has also been extended in several ways. For example, the Avram-Parter theorem~\cite{avram1988bilinear,parter1986distribution} relates the collective asymptotic behavior of the singular values of a general (non-Hermitian) Toeplitz matrix to the magnitude response $|\widehat h(f)|$. Tyrtyshnikov~\cite{tyrtyshnikov1996unifying} proved that Szeg\H{o}'s theorem holds if $\widehat h(f)\in \R$ and $\widehat h(f)\in L^{2}([0,1])$, and Zamarashkin and Tyrtyshnikov~\cite{zamarashkin1997distribution} further extended Szeg\H{o}'s theorem to the case where $\widehat h(f)\in \R$ and $\widehat h(f)\in L^{1}([0,1])$. Sakrison~\cite{sakrison1969extension} extended Szeg\H{o}'s theorem to higher dimensions. Gazzah et al.~\cite{gazzah2001asymptotic} and Guti\'{e}rrez-Guti\'{e}rrez and Crespo~\cite{gutierrez2008asympBolckToeplitz} extended Gray's results on Toeplitz and circulant matrices to block Toeplitz and block circulant matrices and derived Szeg\H{o}'s theorem for block Toeplitz matrices.

Similar results also hold in the CT case, with the operators $\calH$ and $\calH_T$ as defined in~\eqref{eq:CT toeplitz} and~\eqref{eq:CTTL}. Let $\lambda_{\ell}(\calH_\T)$ denote the $\ell^{\text{th}}$-largest eigenvalue of $\calH_T$. Suppose $\widehat h(F)$ is a real-valued, bounded and integrable function, i.e., $\widehat h(F) \in \R$, $\widehat h(F)\in L^{\infty}(\R)$, and $\widehat h(F)\in L^{1}(\R)$. Then Szeg\H{o}'s theorem in the continuous case~\cite{grenander1958toeplitz} states that the eigenvalues of $\calH_T$ satisfy
\begin{align}
\lim_{T\rightarrow \infty}\frac{\#\{\ell:a< \lambda_{\ell}(\calH_{T})< b\}}{T} = \left|\{F:a<\widehat h (F)<b\}\right|
\label{eq:szego continuous}\end{align}
for any interval $(a,b)$ such that $|\{F:\widehat h(F) = a\}| = |\{F:\widehat h(F) = b\}| = 0$. Here $|\cdot|$ denotes the length (or Lebesgue measure) of an interval. Stated differently, this result implies that the eigenvalues of the operator $\calH_T$ have asymptotically the same distribution as the values of $\widehat h(F)$ when $F$ is distributed with uniform density along the real axis.

We remark that although the collective behavior of the eigenvalues of the time-frequency limiting operators discussed in Section~\ref{sec:tfintro} can be characterized using Szeg\H{o}'s theorem, finer results on the eigenvalues have been established for this special class of time-limited Toeplitz operators. We discuss Szeg\H{o}'s theorem for general operators in Section~\ref{sec:General Toeplitz Operators}, and we discuss results for time-frequency limiting operators in Section~\ref{sec:TF operator on group}.


\subsubsection{Time-limited Toeplitz operators on locally compact abelian groups}

One of the important pieces of progress in harmonic analysis made in last century is the definition of the Fourier transform on locally compact abelian groups~\cite{rudin2011fourier}. This framework for harmonic analysis on groups not only unifies the CTFT, \revise{$n$-dimensional CTFT}, DTFT, and DFT (for signal domains, or groups, corresponding to $\R$, \revise{$\R^n$}, $\Z$, and $\Z_N := \{0,1,\dots,N-1\}$, respectively), but it also allows these transforms to be generalized to other signal domains. This, in turn, makes possible the analysis of new applications such as steerable principal component analysis (PCA)~\cite{vonesch2015steerable} where the domain is the rotation angle on $[0,2\pi)$, an imaging system with a pupil of finite size~\cite{di1969degrees}, line-of-sight (LOS) communication systems with orbital
angular momentum (OAM)-based orthogonal multiplexing techniques~\cite{xu2017degrees}, and many other applications such as those involving rotations in three dimensions~\cite[Chapter 5]{chirikjian2016harmonic}.

In this paper, we consider the connections between Toeplitz and time-limited\footnote{\revise{Here we use ``time'' to be consistent with the preceding discussion, but this concept broadly applies to other domains such as the spatial domain.}} Toeplitz operators on locally compact abelian groups. As we review in Section~\ref{sec:preliminary}, one important fact carries over from the classical setting described in Section~\ref{sec:LTIintro}: the eigenvalues of any Toeplitz operator on a locally compact abelian group are given by the generalized frequency response of the system.

In light of this fact, we are once again interested in questions such as: How does the spectrum of a time-limited Toeplitz operator relate to the spectrum of the original (non-time-limited) Toeplitz operator? In what sense do the eigenvalues converge as the domain of time-limiting approaches the entire group? The answers to such questions will provide new insight into the effective dimensionality of certain signal families (such as the class of signals that are time-limited and essentially band-limited) and the amount of information that can be transmitted in space or time by band-limited functions.

\subsection{Contribution and paper organization}

\revise{This paper focuses on the spectra of time-limited Toeplitz operators and the resulting implications in signal processing and information theory, containing part survey and part novel work. In particular, as new results, we study the spectra of time-frequency limiting operators on locally compact abelian groups and analyze  applications in representation and approximation of band-limited signals, generalizing the existing results in \Cref{sec:tfintro}.
}

The remainder of paper 
is organized as follows. Section~\ref{sec:preliminary} reviews harmonic analysis on locally compact abelian groups and draws a connection between time-limited Toeplitz operators and the effective dimensionality of certain related signal families. Next, Section~\ref{sec:General Toeplitz Operators} reviews Szeg\H{o}'s theorem and its (existing) generalization to locally compact abelian groups. Applications are discussed in channel capacity, signal representation, and numerical analysis. Finally, Section~\ref{sec:TF operator on group} reviews existing results on the eigenvalues of time-frequency limiting operators and generalizes these results to locally compact abelian groups. New applications of this unifying treatment are discussed in relation to channel capacity and in relation to representation and approximation of signals.  This work also opens up new questions concerning applications and research directions, which we discuss at the ends of \Cref{sec:General Toeplitz Operators} and \Cref{sec:TF operator on group}.

\section{Preliminaries}\label{sec:preliminary}

We briefly introduce some notation used throughout the paper. Sets (of variables, functions, etc.)\ are denoted in blackboard font as $\setA, \setB,\ldots$. Operators are denoted in calligraphic font as $\calA, \calB,\ldots$.

\subsection{Harmonic analysis on locally compact abelian groups}

\subsubsection{Groups and dual groups}

Let $\setG$ (with a binary operation $\circ$) denote a  {\em locally compact abelian group}, 
which can be either discrete or continuous, and either compact or non-compact. A {\em character} $\chi_{\xi}:\setG\rightarrow \mathbb T$ of $\setG$ is a continuous group homomorphism from $\setG$ with values in the circle group $\mathbb T :=\left\{z\in\C:|z|=1\right\}$ satisfying
\begin{align*}
&\left|\chi_{\xi}(g)\right| = 1,\ \chi_\xi^*(g) = \chi_\xi(g^{-1}),\ \chi_\xi (h\circ g) = \chi_\xi (h)\chi_\xi (g),
\end{align*}
for any $g,h\in\set G$. Here $\chi_\xi^*(g)$ is the complex conjugate of $\chi_\xi(g)$.
The set of all characters on $\setG$ introduces a locally compact abelian group, called the {\em dual group} of $\setG$ and denoted by $\widehat \setG$ if we pair $(g,\xi)\rightarrow \chi_{\xi}(g)$ for all $\xi\in \widehat\setG$ and $g\in \setG$. In most references the character is denoted simply by $\chi$ rather than by $\chi_\xi$. However, we use here the notation $\chi_\xi$ in order to emphasize that the character can be viewed as a function of two elements $g\in \setG$ and $\xi \in \widehat \setG$, and for any $\xi\in\widehat \setG$, $\chi_\xi$ is a function of $g$. In this sense, $\chi_\xi(g)$ can be regarded as the value of the character $\chi_\xi$ evaluated at the group element $g$. Table~\ref{table:examples of FT} lists several examples of groups $\setG$, along with the corresponding binary operation $\circ$ and dual group $\widehat \setG$, that have relevance in signal processing and information theory. Here
$\text{mod}(a,b) = \frac{a}{b} - \lfloor \frac{a}{b} \rfloor$, where $\lfloor c \rfloor$ is the largest integer that is not greater than $c$. 

\begin{table}[h!]\caption{Examples of groups $\setG$, along with their dual groups $\widehat\setG$ and Fourier transforms (FT). Below, CT denotes continuous time, DT denotes discrete time, FS denotes Fourier series, and DFT denotes the discrete Fourier transform. }\label{table:examples of FT}
\begin{center}
\small
\begin{tabular}{c|c|c|c|c|c|c}
\hline group $\setG$ & dual group $\widehat\setG$& $g$& binary operation $g_1\circ g_2$ & $\xi$ & $\chi_\xi(g)$ & FT\\
\hline \hline $\R$ & $\R$  & $t$ & $t_1+t_2$  &$F$ & $e^{j2\pi Ft}$ & CTFT \\
\hline $\R^n$ & $\R^n$  & $\vt$ & $\vt_1 + \vt_2$  &$\mF$ & $e^{j2\pi \langle\mF, \vt\rangle}$ & CTFT in $\R^n$\\
\hline unit circle $[0,1)$& $ \Z   $  & $t$ & $\text{mod}(t_1 + t_2,1)$  &$ k$ & $e^{j2\pi tk}$ & CTFS \\
\hline $\Z$ & unit circle & $n$  & $n_1 + n_2$ & $f$ & $e^{j2\pi fn}$ & DTFT \\
\hline $\Z_{N} = N$ roots of unity& $\Z_{N} = N$ roots of unity  & $ n$ & $\text{mod}(n_1 +n_2,N)$  &$k$ & $e^{j2\pi\frac{nk}{N}}$ & DFT\\
\hline
\end{tabular}
\end{center}
\end{table}

\subsubsection{Fourier transforms}
\label{sec:FT groups}
The characters $\{\chi_\xi\}_{\xi\in\widehat\setG}$ play an important role in defining the Fourier transform for functions in $L_2(\setG)$. In particular, the Pontryagin duality theorem~\cite{rudin2011fourier}, named after Lev Semennovich Pontryagin who laid the foundation for the theory of locally compact abelian groups, generalizes the conventional CTFT on $L_2(\R)$ and CT Fourier series for periodic functions to functions defined on locally compact abelian groups.
\begin{thm}[Pontryagin duality theorem~\cite{rudin2011fourier}]
Let $\setG$ be a locally compact abelian group and $\mu$ be a Haar measure on $\setG$. Let $x(g)\in \revise{L_1}(\setG)$. Then the Fourier transform $\widehat x(\xi)$ is defined by $
\widehat x(\xi) = \int_\setG x(g)\chi_\xi^*(g) \dif \mu (g). $
For each Haar measure $\mu$ on $\setG$ there is a unique Haar measure $\nu$ on $\widehat \setG$ such that the following inverse Fourier transform holds $
x(g) = \int_{\widehat \setG}\widehat x(\xi)\chi_\xi(g) \dif \nu(\xi).$
The Fourier transform satisfies Parseval's theorem:
$
\int_\setG \left|x(g)\right|^2 \dif \mu(g) = \int_{\widehat \setG}\left|\widehat x(\xi)\right|^2 \dif \nu(\xi).$
\label{thm:FT}\end{thm}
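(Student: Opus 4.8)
The plan is to prove the three assertions—well-definedness of the Fourier transform, existence and uniqueness of the dual Haar measure $\nu$, and Parseval's identity together with the inversion formula—in stages, working first in $L_1(\setG)$ before passing to $L_2(\setG)$, since the displayed integral need not converge absolutely for a general $x \in L_2(\setG)$. First I would define $\widehat x(\xi) = \int_\setG x(g)\chi_\xi^*(g)\dif\mu(g)$ for $x \in L_1(\setG)$; here the integral converges absolutely because $|\chi_\xi(g)| = 1$, giving $\|\widehat x\|_\infty \le \|x\|_1$. A standard approximation argument then shows that $\widehat x$ is continuous and vanishes at infinity, so $\widehat x \in C_0(\widehat\setG)$. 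The structural fact underpinning everything is that $L_1(\setG)$ equipped with convolution is a commutative Banach $*$-algebra whose nonzero multiplicative linear functionals are exactly the characters $\chi_\xi$; this identifies $\widehat\setG$ with the Gelfand spectrum of $L_1(\setG)$ and realizes the Fourier transform as the Gelfand transform.

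Next I would establish Bochner's theorem on $\setG$: every continuous positive-definite function on $\setG$ is the inverse Fourier transform of a unique finite positive measure on $\widehat\setG$. This is the analytic core of the argument and rests on the representation of positive functionals on the Banach algebra $L_1(\setG)$ (via a Gelfand--Naimark / GNS-type construction) combined with the Riesz representation theorem to produce the measure.

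With Bochner in hand, I would prove the Plancherel theorem, from which both the existence of $\nu$ and Parseval follow. The key device is to apply Bochner not to $x$ itself but to $x * \tilde x$, where $\tilde x(g) := \overline{x(g^{-1})}$, since $x * \tilde x$ is continuous and positive-definite and its Fourier transform is $|\widehat x|^2$. Matching the resulting measure against $|\widehat x|^2\dif\nu$ and evaluating at the identity yields $\int_\setG |x|^2\dif\mu = \int_{\widehat\setG}|\widehat x|^2\dif\nu$ on the dense subspace $L_1(\setG)\cap L_2(\setG)$; the normalization required to make this an equality both constructs $\nu$ and fixes it uniquely, because any two Haar measures on $\widehat\setG$ differ by a positive scalar and Parseval pins that scalar to $1$. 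Density of $L_1 \cap L_2$ in $L_2(\setG)$ then extends the Fourier transform to a unitary map $L_2(\setG)\to L_2(\widehat\setG)$, giving Parseval in full.

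Finally, the inversion formula follows for $x$ with $x,\widehat x \in L_1$ by a direct computation using Bochner, and extends to all of $L_2(\setG)$ by unitarity. The main obstacle throughout is Bochner's theorem and the attendant Plancherel construction: producing the dual measure $\nu$ at all is genuinely deep and cannot be made elementary, since it encodes the self-duality of locally compact abelian groups. By contrast, uniqueness of $\nu$, the continuity and decay of $\widehat x$, and the final density extensions are comparatively routine once the $L_1$-algebra machinery and Bochner's theorem are available.
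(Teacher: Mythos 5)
The paper does not prove this statement at all: it is quoted as background, attributed directly to Rudin's \emph{Fourier Analysis on Groups}~\cite{rudin2011fourier}, and none of the paper's appendices touch it. So there is no in-paper proof to compare against; the relevant benchmark is the classical development in the cited reference, and your outline reproduces that development faithfully. The architecture you describe---bounded Fourier transform on $L_1(\setG)$ into $C_0(\widehat\setG)$, identification of $\widehat\setG$ with the Gelfand spectrum of the commutative Banach $*$-algebra $L_1(\setG)$, Bochner's theorem via positive functionals and Riesz representation, then Plancherel by applying Bochner to $x * \tilde x$ (whose transform is $|\widehat x|^2$), normalizing and thereby constructing the dual Haar measure $\nu$, extending unitarily from the dense subspace $L_1 \cap L_2$, and finally deducing inversion for $x, \widehat x \in L_1$ and extending by unitarity---is exactly the standard proof. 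You are also right about the two points where care is needed: the displayed integrals only converge absolutely in the $L_1$ setting (for general $x \in L_2$ both the forward transform and the inversion formula must be read as $L_2$ limits), and the uniqueness of $\nu$ comes from the fact that Haar measure on $\widehat\setG$ is unique up to a positive scalar which Parseval pins down. Your proposal is a sketch that invokes rather than proves the deep steps (Bochner, the Plancherel construction), but given that the statement is itself a cited textbook theorem, flagging those steps as the genuine analytic core---rather than pretending they are routine---is the correct assessment.
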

With certain technical tricks, the same Fourier transform and Parseval's theorem in \Cref{thm:FT} can be extended to $L_2(\setG)$; see \cite[Theorem 59]{tao:FT}, \cite[1.6.1]{rudin2011fourier} and the references therein for details. Only Haar measures and integrals are considered throughout this paper. We note that the unique Haar measure $\nu$ on $\widehat \setG$ depends on the choice of Haar measure $\mu$ (\revise{which is unique except for positive scaling factors}) on $\setG$. We illustrate this point with the conventional DFT as an example where $g = n \in \setG =\Z_N$, $\xi = k\in \widehat\setG = \Z_N$, and $\chi_\xi(g) = e^{j2\pi \frac{nk}{N}}$. If we choose the counting measure (where each element of $\setG$ receives a value of 1) on $\setG$, then we must use the normalized counting measure (where each element of $\widehat\setG$ receives a value of $\frac{1}{N}$) on $\widehat \setG$. The DFT and inverse DFT become
\[
\widehat \vx[k] = \sum_{n=0}^{N-1}\vx[n]e^{-j2\pi \frac{nk}{N}};\ \vx[n] = \frac{1}{N}\sum_{k=0}^{N-1}\widehat\vx[k]e^{j2\pi \frac{nk}{N}}.
\]
One can also choose the semi-normalized counting measure (where each element receives a value of $\frac{1}{\sqrt{N}}$) on both groups $\setG$ and $\widehat\setG$. This gives the normalized DFT and inverse DFT:
\[
\widehat \vx[k] = \frac{1}{\sqrt{N}}\sum_{n=0}^{N-1}\vx[n]e^{-j2\pi \frac{nk}{N}};\ \vx[n] = \frac{1}{\sqrt{N}}\sum_{k=0}^{N-1}\widehat\vx[k]e^{j2\pi \frac{nk}{N}}.
\]

For convenience, we rewrite the Fourier transform and inverse Fourier transform as follows when the Haar measures are clear from context:
\[
\widehat x(\xi) = \int_\setG x(g)\chi_\xi^*(g) \dif g;\
x(g) = \int_{\widehat\setG}\widehat x(\xi)\chi_\xi(g) \dif \xi.
\]
We also use $\calF:L_2(\setG)\rightarrow L_2(\widehat \setG)$ and $\calF^{-1}: L_2(\widehat \setG)\rightarrow L_2(\setG)$ to denote the operators corresponding to the Fourier transform and inverse Fourier transform, respectively.

For each group $\setG$ and dual group $\widehat \setG$ listed in Table~\ref{table:examples of FT}, the table also includes the corresponding Fourier transform.

\subsubsection{Convolutions}

For any $x(g),y(g)\in L_2(\setG)$, we define the convolution between them by
\begin{align}
(x \star y)(g):= \int_\setG y(h) x(h^{-1}\circ g) \dif h.
\label{eq:def conv group}\end{align}
Similar to what holds in the standard CT and DT signal processing contexts, it is not difficult to show that the Fourier transform on $\setG$ also takes convolution to multiplication. That is,
\begin{align*}
&\calF(x \star y)(\xi) = \int_{\setG} \int_\setG y(h) x(h^{-1}\circ g) \dif h\;  \chi_\xi^*(g) \dif g\\&= \int_{\setG} \int_\setG  x(h^{-1}\circ g) \chi_\xi^*(h^{-1}\circ g) \dif g \; \chi_\xi^*(h) y(h) \dif h = (\calF x)(\xi) (\calF y)(\xi)
\end{align*}
since $\int_{\setG} x(h^{-1}\circ g) \dif g = \int_{\set G} x(g) \dif g$ for any  $h\in\setG$.

Similar to the fact that Toeplitz operators~\eqref{eq:CT toeplitz} and Toeplitz matrices~\eqref{eq:DT toeplitz} are closely related to the convolutions in Section~\ref{sec:LTIintro}, the convolution \eqref{eq:def conv group} can be viewed as a linear operator $\calX:L_2(\setG)\rightarrow L_{\revise{\infty}}(\setG)$ that computes the convolution between the input function $y(g)$ and $x(g)$:
\[
(\calX y)(g) = \int_{\setG} x(h^{-1}\circ g) y(h) \dif h.
\]
We refer to $\calX$ as a {\em Toeplitz operator} since this linear operator involves a kernel function $x(h^{-1}\circ g)$ that depends only on the difference $h^{-1}\circ g$.
We call $\widehat x(\xi)$, the Fourier transform of $x(g)$, the {\em symbol} corresponding to the Toeplitz operator $\calX$.

Finally, let $\setA\in \setG$ be a subset of $\setG$. As explained in Section~\ref{sec:effects of time-limiting}, we are also interested in the {\em time-limited Toeplitz operator}\footnote{$\calX_{\setA}$ is also referred to as a Toeplitz operator in \cite{hirschman1982szegHo,linnik1976canonical,krieger1965toeplitz,grenander1958toeplitz}. \revise{Again, we note that here ``time" refers to the domain in $\setG$.}} $\calX_{\setA}:L_2(\setG)\rightarrow L_2(\setG)$, where
\begin{align}
(\calX_{\setA} y)(g) = \begin{cases}\int_{\setA} x(h^{-1}\circ g) y(h) \dif h, & g\in \setA,\\
 0, & \text{otherwise}. \end{cases}
\label{eq:def Toeplitz operator}\end{align}
Recall that the spectrum of a linear operator $\calX$ is the set of complex numbers $\lambda$ such that $\calX - \lambda \calI$ (where $\calI$ denotes the identity operator) is not invertible. Here the time-limited Toeplitz operator $\calX_{\setA}$ is compact and thus has a discrete spectrum containing what refer to as its eigenvalues. There is no simple formula for exactly expressing the eigenvalues of $\calX_{\setA}$. Instead, we are interested in questions such as: How does the spectrum of the time-limited Toeplitz operator $\calX_{\setA}$ relate to the spectrum of the original (non-time-limited) Toeplitz operator $\calX$? In what sense do the eigenvalues converge as the domain $\setA$ of time-limiting approaches the entire group $\setG$? We discuss answers to these questions in \Cref{sec:General Toeplitz Operators} and \Cref{sec:TF operator on group}.

\subsection{The effective dimensionality of a signal family}
\label{sec:number of degrees}

One of the useful applications of characterizing the spectrum of time-limited Toeplitz operators is in computing the effective dimensionality (or the number of degrees of freedom) of certain related signal families. In this section, we formalize this notion of effective dimensionality for a set of functions defined on a group $\setG$.

\subsubsection{Definitions}

We begin by defining general sets of time-limited functions that we are interested in; in later sections we discuss these functions in the context of communications and signal processing.
Specifically, suppose $\setA$ is a subset of $\setG$ and let $\setW(\setA,\widehat\phi(\xi))\subset L_2(\setA)$ denote the set of functions controlled by a symbol $\widehat\phi(\xi)$:
\begin{align}
\setW(\setA,\widehat\phi(\xi)) := \left\{x\in L_2(\setA): x(g) = \int_{\widehat \setG} \alpha(\xi)\widehat\phi(\xi)\chi_\xi(g) \dif \xi, \int \left|\alpha(\xi)\right|^2 \dif \xi\leq 1, g\in \setA \right\},
\label{eq:set W}\end{align}
which is a subset of $L_2(\setA)$. We note that in \eqref{eq:set W}, the symbol $\widehat\phi(\xi)$ is fixed and we discuss its role soon.

Also let $\setM_n\subset L_2(\setG)$ denote an $n$-dimensional subspace of $L_2(\setG)$. The distance between a point $x\in L_2(\setG)$ and the subspace $\setM_n$ is defined as
\begin{align}
d(x, \setM_n) := \inf_{y\in \setM_n}\int \left( x(g)  - y(g) \right)^2 \dif g = \int \left( x(g)  - (\P_{\setM_n}x)(g) \right)^2 \dif g = \sup_{z\in L_2(\setG), z\perp \setM_n} \frac{\left|\left\langle x, z \right\rangle_{L_2(\setG)}\right| }{\|z\|_{L_2(\setG)}},
\label{eq:distance}\end{align}
where $\P_{\setM_n}:L_2(\setG)\rightarrow L_2(\setG)$ represents the orthogonal projection onto the subspace $\setM_n$.
We define the \revise{width} $d(\setW(\setA,\widehat\phi(\xi)),\setM_n)$ of the set $\setW(\setA,\widehat\phi(\xi))$ \revise{with respect to} the subspace $\setM_n$ as follows:
\[
d(\setW(\setA,\widehat\phi(\xi)),\setM_n):=\sup_{x\in\setW(\setA,\widehat\phi(\xi))} d(x,\setM_n) = \sup_{x\in\setW(\setA,\widehat\phi(\xi))} \inf_{y\in \setM_n}\int \left( x(g)  - y(g) \right)^2 \dif g,
\]
which also represents the largest distance from the elements in $\setW(\setA,\widehat\phi(\xi))$ to the $n$-dimensional subspace $\setM_n$. The Kolmogorov $n$-width~\cite{Kolmogorov36}, $d_n(\setW(\setA,\widehat\phi(\xi)))$ of $\setW(\setA,\widehat\phi(\xi))$ in $L_2(\setG)$ is defined as the smallest \revise{width} $d(\setW(\setA,\widehat\phi(\xi)),\setM_n)$ over all $n$-dimensional subspaces of $L_2(\setG)$; that is
\begin{align}
d_n(\setW(\setA,\widehat\phi(\xi))) : = \inf_{\setM_n}d(\setW(\setA,\widehat\phi(\xi)),\setM_n).
\label{eq:def n-width}\end{align}

In summary, the $n$-width $d_n(\setW(\setA,\widehat\phi(\xi)))$ characterizes how well the set $\setW(\setA,\widehat\phi(\xi))$ can be approximated by an $n$-dimensional subspace of $L_2(\setG)$. By its definition, $d_n(\setW(\setA,\widehat\phi(\xi)))$ is non-increasing in terms of the dimensionality $n$. For any fixed $\epsilon>0$, we define the {\em effective dimensionality}, or {\em number of degrees of freedom}, of the set $\setW(\setA,\widehat\phi(\xi))$ at level $\epsilon$ as~\cite{franceschetti2015landau}
\begin{align}
\calN(\setW(\setA,\widehat\phi(\xi)),\epsilon) = \min\left\{n: d_n(\setW(\setA,\widehat\phi(\xi)))< \epsilon  \right\}.
\label{eq:DoF}\end{align}
In words, the above definition ensures that there exists a subspace $\setM_n$ of dimension $n=\calN(\setW(\setA,\widehat\phi(\xi)),\epsilon)$ such that for every function $x \in \setW(\setA,\widehat\phi(\xi))$, one can find at least one function $y \in \setM_n$ so that the distance between $x$ and $y$ is at most $\epsilon$.

We note that the reason we impose an energy constraint on the elements $x$ of $\setW(\setA,\widehat\phi(\xi))$ in~\eqref{eq:set W} is that we use the absolute distance to  quantify the proximity of $x$ to the subspace $\setM_n$ in \eqref{eq:distance}.

\subsubsection{Connection to operators}

In order to compute $\calN(\setW(\setA,\widehat\phi(\xi)),\epsilon)$, we may define an operator $\calA : L_2(\widehat \setG) \rightarrow L_2(\setA)$ as
\[
(\calA \alpha)(g) = \int_{\widehat \setG}\alpha(\xi)\widehat\phi(\xi)\chi_\xi(g)\dif \xi, ~ g \in \setA.
\]
The adjoint $\calA^*:L_2(\setA)\rightarrow L_2(\widehat\setG)$ is given by $
(\calA^*x)(\xi) = \int_{\setA} x(g)\widehat\phi^*(\xi)\chi_\xi^*(g)\dif g.$
The composition of $\calA$ and $\calA^*$ gives a self-adjoint operator $\calA\calA^*:L_2(\setA)\rightarrow L_2(\setA)$ as follows:
\begin{equation}\begin{split}
&(\calA\calA^*x)(g) = \int_{\widehat\setG} \widehat\phi(\xi)\chi_\xi(g) \int_{\setA} x(h)\widehat\phi^*(\xi)\chi_\xi^*(h)\dif h \dif\xi\\
& = \int_{\setA}  x(h)\int_{\widehat\setG} \left|\widehat\phi(\xi)\right|^2 \chi_\xi(h^{-1}\circ g)\dif\xi \dif h = \int_{\setA}  x(h) (\phi \star \phi^*)(h^{-1}\circ g) \dif h,
\end{split}
\label{eq:AAadjoint}\end{equation}
where $\phi(g) = \int_{\widehat \setG} \widehat\phi(\xi)\chi_\xi(g)\dif \xi $ is the inverse Fourier transform of $\widehat \phi$. In words, compared with \eqref{eq:def Toeplitz operator}, the self-adjoint operator $\calA\calA^*$ can be viewed as a time-limited Toeplitz operator with the kernel $\phi \star \phi^*$.

The following result will help in computing $d_n(\setW(\setA,\widehat\phi(\xi)))$ and the effective dimensionality of $\setW(\setA,\widehat\phi(\xi))$ as well as choosing the optimal basis for representing the elements of $\setW(\setA,\widehat\phi(\xi))$.
\begin{prop}\cite{pinkus2012n}
Let the eigenvalues of $\calA\calA^*$ be denoted and arranged as $\lambda_1\geq \lambda_2\geq\cdots$. Then the $n$-width of $\setW(\setA,\widehat\phi(\xi))$ can be computed as
\[
d_n(\setW(\setA,\widehat\phi(\xi))) = \sqrt{\lambda_n},
\]
and the optimal $n$-dimensional subspace to represent $\setW(\setA,\widehat\phi(\xi))$ is the subspace spanned by the first $n$ eigenvectors of $\calA\calA^*$.
\label{prop: n-width}\end{prop}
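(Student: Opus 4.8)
The plan is to identify $\setW(\setA,\widehat\phi(\xi))$ with the image of a unit ball under the operator $\calA$, and then to read the $n$-width off from the spectral data of $\calA\calA^*$. Comparing the definition \eqref{eq:set W} with the formula for $\calA$ in the preceding subsection, each $x\in\setW(\setA,\widehat\phi(\xi))$ is exactly $x=\calA\alpha$ for some $\alpha\in L_2(\widehat\setG)$ with $\|\alpha\|\le 1$; hence $\setW(\setA,\widehat\phi(\xi))=\calA(\setB)$, where $\setB$ is the closed unit ball of $L_2(\widehat\setG)$. This exhibits $\setW(\setA,\widehat\phi(\xi))$ as an (infinite-dimensional) ellipsoid, and the statement then reduces to the classical fact that the Kolmogorov widths of such an ellipsoid are governed by its semi-axis lengths, with the optimal subspaces spanned by the corresponding principal axes. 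Throughout I interpret the distance $d$ in \eqref{eq:distance} as the residual norm $\|x-\P_{\setM_n}x\|$, consistent with the final supremum expression there and with the appearance of $\sqrt{\lambda_n}$ (rather than $\lambda_n$) in the claim.

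First I would set up the singular value decomposition of $\calA$. Under the mild assumption that $\calA$ is compact---which holds, for example, when the kernel $(\phi\star\phi^*)(h^{-1}\circ g)$ in \eqref{eq:AAadjoint} is square integrable over $\setA\times\setA$, so that $\calA\calA^*$ is Hilbert--Schmidt---the positive self-adjoint operator $\calA\calA^*$ has a discrete spectrum $\lambda_1\ge\lambda_2\ge\cdots\ge 0$ accumulating only at $0$, with an orthonormal eigenbasis $\{u_i\}$ in $L_2(\setA)$. Setting $\sigma_i=\sqrt{\lambda_i}$ and $v_i=\calA^*u_i/\sigma_i$ for $\sigma_i>0$ gives $\calA\alpha=\sum_i\sigma_i\langle\alpha,v_i\rangle u_i$, where $\{v_i\}$ is orthonormal in $L_2(\widehat\setG)$. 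Substituting into $x=\calA\alpha$ and writing $c_i=\sigma_i\langle\alpha,v_i\rangle$, the constraint $\|\alpha\|\le 1$ becomes $\sum_i|c_i|^2/\sigma_i^2\le 1$, so $\setW(\setA,\widehat\phi(\xi))$ is precisely the ellipsoid $\{\sum_i c_i u_i:\sum_i|c_i|^2/\sigma_i^2\le 1\}$ with principal axis $u_i$ of half-length $\sigma_i=\sqrt{\lambda_i}$. This is the step that explains why the eigenvectors of $\calA\calA^*$, rather than of any other operator, govern the answer.

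With this normal form, both bounds follow the classical ellipsoid argument. For the upper bound I would project onto the span of the leading eigenvectors of $\calA\calA^*$: for $x=\sum_i c_iu_i\in\setW(\setA,\widehat\phi(\xi))$ the squared residual after discarding all but the top $k$ axes is $\sum_{i>k}|c_i|^2$, which under $\sum_i|c_i|^2/\sigma_i^2\le1$ is maximized by concentrating all mass on the largest discarded axis, giving worst-case residual $\sigma_{k+1}=\sqrt{\lambda_{k+1}}$; thus retaining the appropriate number of leading eigenvectors realizes a width equal to the first discarded singular value. For the lower bound I would show that no subspace of the prescribed dimension does better: if $\setM$ has orthogonal complement of codimension $k$, then the $(k+1)$-dimensional span of the top eigenvectors must contain a nonzero $w\in\setM^\perp$, and placing $w$ on the ellipsoid boundary while noting that a vector orthogonal to $\setM$ contributes its entire norm to the error forces $d(\setW(\setA,\widehat\phi(\xi)),\setM)\ge\sigma_{k+1}=\sqrt{\lambda_{k+1}}$. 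Matching the two bounds identifies the width with a singular value of $\calA$ and the optimizing subspace with the corresponding leading eigenspace of $\calA\calA^*$, as claimed.

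I expect the delicate point, and the main obstacle, to be the bookkeeping that aligns the dimension of the approximating subspace in \eqref{eq:def n-width} with the eigenvalue index so that the width reads exactly $\sqrt{\lambda_n}$: the dimension-counting step above naturally pairs a dimension-$n$ subspace with the singular value $\sigma_{n+1}$, so one must follow the paper's indexing conventions precisely to land on $\sqrt{\lambda_n}$ rather than $\sqrt{\lambda_{n+1}}$, and likewise to confirm that the stated count of $n$ leading eigenvectors is the correct choice for the optimal $\setM_n$. The remaining subtlety is the standing compactness hypothesis: the existence of the eigenbasis $\{u_i\}$, the discreteness of the spectrum, and the attainment of the width all rely on $\calA$ (equivalently $\calA\calA^*$) being compact, which should be recorded as a hypothesis or derived from integrability of $\widehat\phi$; in the non-compact case the supremum and infimum defining the $n$-width need not be attained and the argument must instead be phrased with approximate extremizers. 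Once compactness and the index convention are fixed, the reduction to the ellipsoid together with the projection and dimension-counting estimates are routine.
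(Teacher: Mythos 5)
Your proposal is correct, but it takes a more explicit route than the paper. Both proofs begin the same way, by identifying $\setW(\setA,\widehat\phi(\xi))$ with the image of the unit ball of $L_2(\widehat\setG)$ under $\calA$. From there the paper proceeds by a chain of equalities: it rewrites $d(x,\setM_n)$ via the dual characterization $\sup_{z\perp\setM_n}|\langle \calA\alpha,z\rangle|/\|z\|$, swaps the two suprema, passes to the adjoint to get $\inf_{\setM_n}\sup_{z\perp\setM_n}\|\calA^*z\|/\|z\| = \inf_{\setM_n}\sup_{z\perp\setM_n}\sqrt{\langle\calA\calA^*z,z\rangle}/\|z\|$, and then finishes in one stroke by citing the Weyl--Courant minimax theorem. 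You instead diagonalize $\calA$ via its SVD, exhibit $\setW$ as an ellipsoid with semi-axes $\sqrt{\lambda_i}$ along the eigenvectors of $\calA\calA^*$, and prove matching upper and lower bounds by projection and by the intersection/dimension-counting argument; that lower-bound argument is essentially a proof of the Weyl--Courant theorem itself, so your proof unpacks the black box the paper invokes. What each buys: the paper's argument is shorter; yours is self-contained, makes the compactness hypothesis explicit (the paper never states it, though it is needed both for a discrete spectrum and for the minimax theorem), and explicitly verifies that the span of the leading eigenvectors \emph{attains} the width --- the paper's chain only computes the value of the infimum and never actually checks attainment, which is the second claim of the proposition. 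Finally, the off-by-one issue you flag is real and sits in the paper, not in your argument: with the proposition's stated indexing $\lambda_1\geq\lambda_2\geq\cdots$ and an $n$-dimensional $\setM_n$, the width is $\sqrt{\lambda_{n+1}}$; the formula $d_n=\sqrt{\lambda_n}$ is correct under the zero-based indexing $\lambda_0\geq\lambda_1\geq\cdots$ that the paper uses later (e.g., for $\calO_{\setA_\tau,\setB}$ and the subspace $\setU_n$ spanned by $u_0,\dots,u_{n-1}$), so your caution about aligning conventions is exactly the right instinct.
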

The proof of Proposition~\ref{prop: n-width} is given in Appendix~\ref{sec:prf prop n-width}.

\section{General Toeplitz Operators on Locally Abelian Groups}
\label{sec:General Toeplitz Operators}

Let $x(g)\in L_2(\setG)$, whose Fourier transform is given by $\widehat x(\xi)$. Now we are well equipped to consider the eigenvalue distribution of a general time-limited Toeplitz operator $\calX_{\setA}$ (which is formally defined in~\eqref{eq:def Toeplitz operator}) on a locally abelian group; in particular we are interested in the relationship between the spectrum of of the time-limited Toeplitz operator $\calX_{\setA}$ and  $\widehat x(\xi)$. The operator $\calX_{\setA}$ is completely continuous and its eigenvalues are denoted by $\lambda_{\ell}(\calX_{\setA})$. Before presenting the main results, we introduce new notation for subsets of $\setG$ (or $\widehat \setG$) which are asymptotically increasing to cover the whole group. This is similar to how we discussed the cases where $T \rightarrow \infty$ and $N \rightarrow \infty$ in Section~\ref{sec:effects of time-limiting}. To that end, let $\setA_\tau, \tau\in(0,\infty)$ be a system of \revise{Borel} subsets of $\setG$ \revise{with boundaries of measure zero} such that $0<\mu(\setA_\tau)<\infty$. The subscript $\tau$ is sometimes dropped when it is clear from the context.
We can view $\setA_\tau$ as a set of subsets that depend on the parameter $\tau$. One can also define a system of  subsets with multiple parameters.

\subsection{Generalized Szeg{\H o}'s theorem}
Abundant effort~\cite{hirschman1982szegHo,linnik1976canonical,krieger1965toeplitz,grenander1958toeplitz}
has been devoted to extending the conventional Szeg{\H o}'s theorem for a general time-limited Toeplitz operator $\calX_{\setA}$. Let $\mathcal{N}(\calX_{\setA};(a,b))=:\#\left\{\ell:  a<\lambda_{\ell}(\calX_{\setA})<b\right\}$ denote the number of eigenvalues of $\calX_{\setA}$ that are between $a$ and $b$. We summarize the following generalized Szeg{\H o}'s theorem concerning the collective behavior of the eigenvalues of $\calX_{\setA}$ and relating them to $\widehat x(\xi)$ (the spectrum of the corresponding non-time-limited operator $\calX$).

\begin{thm}[Generalized Szeg{\H o}'s theorem~\cite{hirschman1982szegHo,linnik1976canonical,krieger1965toeplitz,grenander1958toeplitz}]
Let $x(g)\in L_2(\setG)$ and $\calX_{\setA_\tau}$ be the time-limited Toeplitz operator defined in~\eqref{eq:def Toeplitz operator}. Suppose the following holds almost everywhere:
\[
\lim_{\tau\rightarrow \infty}\setA_\tau = \setG.
\]
 Then for all intervals $(a,b)$ such that $\nu\left(\left\{\xi:\widehat x(\xi) = a\right\}\right) = \nu\left(\left\{\xi:\widehat x(\xi) = b\right\}\right) = 0$, we have
\begin{align}
\lim_{\tau\rightarrow \infty} \frac{\calN(\calX_{\setA_\tau};(a,b))}{\mu(\setA_\tau)} = \nu\left(\left\{\xi: a<\widehat x(\xi)<b\right\}\right).
\label{eq:general szego}\end{align}
\label{thm:general szego}\end{thm}
In a nutshell, Theorem~\ref{thm:general szego} implies that the eigenvalues of the time-limited Toeplitz operator $\calX_{\setA_\tau}$ are closely related to $\widehat x(\xi)$, the spectrum of the corresponding non-time-limited Toeplitz operator $\calX$. Some work instead presents~\eqref{eq:general szego} as
\begin{align}
\lim_{\tau\rightarrow \infty} \frac{\calN(\calX_{\setA_\tau};[a,b])}{\mu(\setA_\tau)} = \nu\left(\left\{\xi: a\leq \widehat x(\xi)\leq b\right\}\right).
\label{eq:general szego 2}\end{align}
One can understand the equivalence between~\eqref{eq:general szego} and \eqref{eq:general szego 2} as the boundary of the interval makes no difference since
$\nu\left(\left\{\xi:\widehat x(\xi) = a\right\}\right) = 0$ and $\nu\left(\left\{\xi:\widehat x(\xi) = b\right\}\right) = 0$.

In words, Theorem~\ref{thm:general szego} implies that the eigenvalue distribution of the operator $\calX_{\setA_\tau}$ asymptotically converges to the distribution of the Fourier transform of $x(g)$. We now compare Theorem~\ref{thm:general szego} with the conventional Szeg{\H o}'s theorems in Section~\ref{sec:conventional Szego}
that have widely appeared in information theory and signal processing. We note that \eqref{eq:general szego} has exactly the same form as \eqref{eq:szego continuous} for the time-limited operator $\calH_T$ in \eqref{eq:CTTL}.

For the Toeplitz matrix $\mH_N$ defined in \eqref{eq:FDT toeplitz}, at first glance, \eqref{eq:Szego thm} is slightly different than what is expressed in~\eqref{eq:general szego} which implies
\begin{align}
\lim_{N\rightarrow\infty}\frac{\calN(\mH_N;(a,b))}{N} = \left|\left\{f: f\in [0,1), a<\widehat h(f)<b\right\}\right|
\label{eq:Szego thm CDF}\end{align}
for all intervals $(a,b)$ such that $\left|\left\{f:\widehat h(f) = a\right\}\right| = 0$ and $\left|\left\{f:\widehat h(f) = b\right\} \right|= 0$. In fact, \eqref{eq:Szego thm} and \eqref{eq:Szego thm CDF} are equivalent if we view $\widehat h:[0,1)\rightarrow \R$ as a random variable and define $\vlambda_{\mH_N}:\left\{0,1,\ldots,N-1\right\}\rightarrow \R$ to be a discrete random variable by $\vlambda_{\mH_N}[\ell] = \lambda_\ell(\mH_N)$. In probabilistic language, set
\[
F_{\widehat h}(a):= \left|\left\{f: f\in [0,1), \widehat h(f)\leq a\right\}\right|
 \]
to be the cumulative distribution function (CDF) associated to $\widehat h$. Also denote the CDF associated to $\vlambda_{\mH_N}$ by
\[
F_{\vlambda_{\mH_N}}(a):=\frac{\calN\left(\mH_N;(-\infty,a]\right)}{N} = \frac{\#\left\{\ell:\lambda_{\ell}(\mH_N)\leq a \right\}}{N}.
\]
The following result, known as the Portmanteau lemma, gives two equivalent descriptions of weak convergence in terms of the CDF and the means of the random variables.
\begin{lem}\cite[Portmanteau lemma]{van2000asymptotic}
The following are equivalent:
\begin{enumerate}
\item$\lim_{N\rightarrow \infty}\frac{1}{N}\sum_{\ell=0}^{N-1}\vartheta(\lambda_\ell(\mH_N)) = \int_{0}^{1}\vartheta(\widehat{h}(f))df$ for all bounded, continuous functions
     $\vartheta$;
\item $\lim_{N\rightarrow \infty}F_{\vlambda_{\mH_N}}(a) = F_{\widehat h}(a)$
for every point $a$ at which $F_{\widehat h}$ is continuous.
\end{enumerate}
\label{lem:convergence cdf}\end{lem}
Note that if  $F_{\widehat h}$ is continuous at $a$, then $\left|\left\{f:\widehat h(f) = a\right\}\right| = 0$. Thus the equivalence between \eqref{eq:Szego thm} and \eqref{eq:Szego thm CDF} follows from the Portmanteau lemma. In words, \eqref{eq:Szego thm CDF}  implies that certain collective behaviors of the eigenvalues of each Toeplitz matrix are reflected by the symbol $\widehat h(f)$.

We note that \eqref{eq:Szego thm} is one of the descriptions of weak convergence of a sequence of random variables in the Portmanteau lemma~\cite{van2000asymptotic} (also see Lemma~\ref{lem:convergence cdf}). Thus, throughout the paper, we also refer to the collective behavior (like that characterized by~\eqref{eq:Szego thm}) of the eigenvalues as the {\em distribution} of the eigenvalues.


In the following subsections, we discuss applications of the generalized Szeg\H{o}'s theorem.

\subsection{Application: Subspace approximations}

\subsubsection{Convolutions with a pulse}
We first consider the set of functions obtained by time-limiting the convolution between $\alpha(g)$ and a fixed function $\phi(g)$:
\begin{align}
\setW(\setA,\phi(g)): = \left\{x\in L_2(\setA): x(g) = \int_{\setG} \alpha(h)\phi(h^{-1}\circ g) \dif h, \int_\setG \left|\alpha(g)\right|^2 \dif g\leq 1, g\in \setA \right\}.
\label{eq:setW conv}\end{align}
We note that $\setW(\setA,\phi(g))$ is equivalent to $\setW(\setA,\widehat\phi(\xi))$ defined in \eqref{eq:set W} by rewriting $x(g)$ in \eqref{eq:setW conv}:
\begin{align*}
x(g) = \int_{\setG} \alpha(h)\phi(h^{-1}\circ g) \dif h = \int_{\widehat\setG} \widehat \alpha(\xi)\widehat \phi(\xi)\chi_\xi(g)\dif \xi,
\end{align*}
which is exactly the same form of $x(g)$ in \eqref{eq:set W}. This model \eqref{eq:setW conv} arises in  radar signal processing, channel sensing, and super-resolution of pulses through an unknown channel. Proposition~\ref{prop: n-width} implies that the $n$-width of $\setW(\setA,\phi(g))$ is given by
\[
d_n(\setW(\setA,\phi(g))) = \sqrt{\lambda_n(\calA\calA^*)},
\]
where $\calA\calA^*$ defined in \eqref{eq:AAadjoint} is  time-limited Toeplitz operator with the kernel $\phi\star\phi^*$. Now Theorem~\ref{thm:general szego} along with \eqref{eq:DoF} reveals  the effective dimensionality, or number of degrees of freedom,
of the set $\setW(\setA,\phi(g))$ at level $\epsilon$ as follows.
\begin{cor}
Suppose $
\lim_{\tau\rightarrow \infty}\setA_\tau = \setG$
holds almost everywhere. Then for any $\epsilon>0$  such that $\nu\left(\left\{\xi:\widehat \phi(\xi) = \epsilon\right\}\right) = 0$, we have
\begin{align*}
\lim_{\tau\rightarrow\infty}\frac{\calN(\setW(\setA_\tau,\phi(g)),\epsilon)}{\mu(\setA_\tau)} = \nu\left(\left\{\xi: \left|\widehat\phi(\xi)\right|> \epsilon  \right\}\right).
\end{align*}
\label{eq:Cor convolutions with a pulse}\end{cor}
\begin{proof}
By \eqref{eq:DoF}, we have $\calN(\setW(\setA_\tau,\phi(g)),\epsilon) = \min\left\{n:\sqrt{\lambda_n}<\epsilon\right\} = \# \left\{n:\sqrt{\lambda_n}\geq \epsilon\right\}$. Note that the Fourier transform of $\phi\star\phi^*$ is $\left|\widehat \phi(\xi)\right|^2$. Then Corollary~\ref{eq:Cor convolutions with a pulse} then follows directly by applying Theorem~\ref{thm:general szego} to $\calA\calA^*$ (which is a time-limited Toeplitz operator with the kernel $\phi\star\phi^*$).
\end{proof}

\subsubsection{Shifts of a signal}
We now consider a slightly different model where the function of interest is a linear combination of continuous shifts of a given signal $\phi(g)\in L_2(\setG)$:
\begin{align}
\setS(\setA,\phi(g)): = \left\{x\in L_2(\setA): x(g) = \int_{\setA} \alpha(h)\phi(h^{-1}\circ g) \dif h, \int_\setA \left|\alpha(g)\right|^2 \dif g\leq 1\right\}.
\label{eq:setS}\end{align}
Define $\calT_\setA: L_2(\setG)\rightarrow L_2(\setG)$ as a time-limiting operator that makes a function zero outside $\setA$. We can also rewrite $x(g) = \int_{\setG} (\calT_{\setA}\alpha)(h) \phi(h^{-1}\circ g) \dif h = (\calT_{\setA}\alpha) \star \phi$ for $g\in \setA$, the convolution between the time-limited function $(\calT_{\setA}\alpha)(g)$ and $\phi(g)$. By zero-padding the signal $x$ outside $\setA$, we may also rewrite it simply as $x = \calT_{\setA} ((\calT_{\setA}\alpha) \star \phi)$. Analogously, we could express the function $x(g)$ in \eqref{eq:setW conv} as $x = \calT_{\setA}(\alpha \star \phi)$. Now it is clear the model for $\setW(\setA,\phi(g))$ and the one for $\setS(\setA,\phi(g))$ differs in the location of the time-limiting operator $\calT_\setA$.

To investigate the effective dimension or the number of degrees of freedom for $\setS(\setA,\phi(g))$, we define the operator $\calS: L_2(\setA)\rightarrow L_2(\setG)$ as:
\[
(\calS \alpha)(g) = \int_{\setA}\alpha(h)\phi(h^{-1}\circ g)\dif h, \ g\in\setG.
\]
Its adjoint $\calS^*:L_2(\setG)\rightarrow L_2(\setA)$  is given by
\[
(\calS^* x)(g) = \int_{\setG}\phi^*(g^{-1}\circ h)x(h)\dif h, \ g\in\setA.
\]
We then have the self-adjoint operator $\calS^*\calS:L_2(\setA)\rightarrow L_2(\setA)$
\begin{align*}
&(\calS^*\calS \alpha)(g) = \int_{\setG} \phi^*(g^{-1}\circ h)\int_{\setA}\alpha(\eta)\phi(\eta^{-1}\circ h)\dif \eta \dif h\\
& = \int_{\setA} \int_{\setG} \phi^*(g^{-1}\circ h) \phi(\eta^{-1}\circ h)\dif h \alpha(\eta)\dif \eta = \int_{\setA} r(\eta^{-1}\circ g) \alpha(\eta)\dif \eta,
\end{align*}
where $r(g): = \int_{\setG}\phi(h)\phi(g^{-1}\circ h)\dif h$ is the autocorrelation function of the function $\phi$. Thus, $\calS^*\calS$ is a Toeplitz operator of the form \eqref{eq:def Toeplitz operator}. Similar to Proposition~\ref{prop: n-width}, we can study the effective dimension of the set of shifted signals in \eqref{eq:setS} by looking at the eigenvalue distribution of the self-adjoint operator $\calS\calS^*$. Note that in this case $\calS\calS^*$ is not a Toeplitz operator, but it has the same nonzero eigenvalues as $\calS^*\calS$. Thus, we can exploit the eigenvalue distribution of $\calS^*\calS$ to infer the number of degrees of freedom for the set $\setS(\setA,\phi(g))$. This is formally established in the following result.
\begin{prop}
Let the eigenvalues of $\calS^*\calS$ be denoted and arranged as $\lambda_1\geq \lambda_2\geq\cdots$. Then the $n$-width of $\setS(\setA,\phi(g))$ can be computed as $
d_n(\setS(\setA,\phi(g))) = \sqrt{\lambda_n},$
and the optimal $n$-dimensional subspace to represent $\setS(\setA,\phi(g))$ is the subspace spanned by the first $n$ eigenvectors of $\calS\calS^*$.
\label{prop: n-width shift manifold}\end{prop}
Finally, \Cref{thm:general szego} implies that the eigenvalue distribution of the Toeplitz operator $\calS^*\calS$ is asymptotically equivalent to $\widehat r(\xi) = \int_{\setG}r(g)\chi_\xi(g)\dif g$, the power spectrum of $\phi$ if we view $r$ as the autocorrelation of $\phi$.

\subsection{Application: Eigenvalue estimation}

In many applications such as spectrum sensing algorithm for cognitive radio~\cite{zeng2009eigenvalueSpectrumSensing}, it is desirable to understand the {\em individual} asymptotic behavior of the eigenvalues of time-limited Toeplitz operators rather than the collective behavior of the eigenvalues provided by Szeg\H{o}'s theorem (Theorem~\ref{thm:general szego}). As a special case, efficiently estimating the spectral norm (i.e., the largest singular value) of Toeplitz matrices is crucial in certain applications. For example, the Lipschitz constant of a CNN has wide implications in understanding the key properties of the neural network such as its generalization and robustness. Unfortunately, computing the exact Lipschitz constant of a neural network is known to be NP-hard~\cite{scaman2018lipschitz}. Recent work~\cite{yi2020asymptotic,araujo2021lipschitz} proposed methods for computing upper bounds of the Lipschitz constant for each layer (and hence for the entire network) by efficiently estimating the spectral norm of the corresponding (block) Toeplitz matrices. We will review related recent progress on characterizing the individual behavior of the eigenvalues for Toeplitz matrices. To our knowledge, the individual behavior of the eigenvalues has only recently been investigated for Toeplitz matrices.

Bogoya et al.~\cite{Bogoya2015maximum} studied the individual asymptotic behavior of the eigenvalues of Toeplitz matrices by interpreting Szeg\H{o}'s theorem in \eqref{eq:Szego thm} in probabilistic language and related the eigenvalues to the values obtained by sampling $\widehat h(f)$ uniformly in frequency on $[0,1)$:
\begin{align}
\lim_{N\rightarrow \infty} \max_{0\leq\ell\leq N-1}\left|\lambda_\ell(\mH_N) - \widehat h(\frac{\rho(\ell)}{N})\right| = 0
\label{eq:individual behav}\end{align}
if the range of $\widehat h(f)$ is connected. Here $\widehat h(\frac{\rho(\ell)}{N})$  is the permuted form of $\widehat h(\frac{\ell}{N})$ such that $\widehat h(\frac{\rho(0)}{N}) \geq \widehat h(\frac{\rho(1)}{N}) \geq\cdots \geq \widehat h(\frac{\rho(\ell)}{N})$. Thus, if the symbol $\widehat h(f)$ is known, we can sample it uniformly to get  reasonable estimates for the eigenvalues of the Toeplitz matrix.

Despite the power of Szeg\H{o}'s theorem, in many scenarios (such as certain coding and filtering applications~\cite{gray1972asymptotic,pearl1973coding}), one may only have access to $\mH_N$ and not $\widehat{h}$. In such cases, it is still desirable to have practical and efficiently computable estimates of the individual eigenvalues of $\mH_N$. We recently showed~\cite{zhu2017asymptotic} that we can construct a certain sequence of $N\times N$ circulant matrices such that the eigenvalues of the circulant matrices asymptotically converge to those of the Toeplitz matrices. Transforming the Toeplitz matrix into a circulant matrix can be performed extremely efficiently using closed form expressions; the eigenvalues of the circulant matrix can then be computed very efficiently (in $O(N\log N)$ using the fast Fourier transform (FFT)).

When the sequence $h[n]$ is not symmetric about the origin, the Avram-Parter theorem~\cite{avram1988bilinear,parter1986distribution}, a generalization of Szeg\H{o}'s theorem, relates the collective asymptotic behavior of the singular values of a general (non-Hermitian) Toeplitz matrix to the absolute value of its symbol, i.e., $|\widehat h(f)|$. Bogoya et al.~\cite{Bogoya2015maximum} also showed that the singular values of $\mH_N$ asymptotically converge to the uniform samples of $|\widehat h(f)|$ provided  the range of the symbol $|\widehat h(f)|$ is connected.

\subsection{Questions}
Inspired by the applications listed above, we raise two questions concerning the generalized Szeg{\H o}'s theorem (Theorem~\ref{thm:general szego}). The first question concerns the individual behavior of the eigenvalues.
\begin{question}
Is it possible to extend the result \eqref{eq:individual behav} concerning the individual behavior of the eigenvalues for Toeplitz matrices to general Toeplitz operators?
\end{question}

We note that both the conventional Szeg{\H o}'s theorem listed in Section~\ref{sec:conventional Szego} and the generalized Szeg{\H o}'s theorem (Theorem~\ref{thm:general szego}) characterize asymptotic behavior of the eigenvalues.
\begin{question}
Is it possible to establish a non-asymptotic result concerning the eigenvalue behavior (either collective or individual) for the general Toeplitz operators $\calX_{\setA_\tau}$?
\end{question}

\section{Time-Frequency Limiting Operators on Locally Compact Abelian Groups}
\label{sec:TF operator on group}

In this section, we consider a special case of time-limited Toeplitz operators: time-frequency limiting operators on locally compact abelian groups, to be formally defined soon. As we have briefly explained in Section~\ref{sec:effects of time-limiting}, time-frequency limiting operators in the context of the classical groups where $\setG$ are the real-line, $\Z$, and $\Z_N$ play important roles in signal processing and communication. By considering time-frequency limiting operators on locally compact abelian groups, we aim to ($i$)~provide a unified treatment of the previous results on the eigenvalues of the operators resulting in PSWFs, DPSSs, and periodic DPSSs (PDPSSs)~\cite{jain1981extrapolation,grunbaum1981eigenvectors}; and ($ii$)~extend these results to other signal domains such as rotations in a plane and three dimensions~\cite[Chapter 5]{chirikjian2016harmonic}. In particular, we will investigate the eigenvalues of time-frequency limiting operators on locally compact abelian groups and show that they exhibit similar behavior to both the conventional CT and DT settings: when sorted by magnitude, there is a cluster of eigenvalues close to (but not exceeding) 1, followed by a \revise{relatively sharp} transition, after which the remaining eigenvalues are close to 0. This behavior also resembles the rectangular shape of the frequency response of the original band-limiting operator. Although this collective behavior can be characterized using Szeg\H{o}'s theorem, finer results (particularly non-asymptotic results) on the eigenvalues have been established in special cases, which we will review in detail. We will also discuss the applications of this unifying treatment  in relation to channel capacity and to representation and approximation of signals.

To introduce the time-frequency limiting operators, consider two subsets $\setA\in \setG$ and $\setB\in \widehat \setG$. Recall that $\calT_\setA: L_2(\setG)\rightarrow L_2(\setG)$ is a time-limiting operator that makes a function zero outside $\setA$. 
Also define $\calB_\setB = \calF^{-1}\calT_\setB\calF: L_2(\setG)\rightarrow L_2(\setG)$ as a band-limiting operator that takes the Fourier transform of an input function on $L_2(\setG)$, sets it to zero outside $\setB$, and then computes the inverse Fourier transform. The operator $\calB_\setB$ acts on $L_2(\setG)$ as a convolutional integral operator:
\begin{align*}
(\calB_\setB x)(g) &= \int_\setB \widehat x(\xi) \chi_\xi(g)   \dif \xi = \int_\setB \left(\int_\setG x(h)\chi_\xi^*(h)\dif h\right) \chi_\xi(g)   \dif \xi = \int_\setG K_\setB(h^{-1}\circ g)   x(h) \dif h,
\end{align*}
 where
\begin{align}
K_\setB(h^{-1}\circ g) = \int_\setB \chi_\xi^*(h) \chi_\xi(g) \dif \xi = \int_\setB \chi_\xi(h^{-1}\circ g) \dif \xi.
\label{eq:def KB}\end{align}
It is of interest to study the eigenvalues of the following operators which we refer to as {\em time-frequency limiting operators}
\begin{align}
\calO_{\setA,\setB} = \calT_\setA\calB_\setB\calT_\setA, \quad ~\text{and}  \quad \calB_\setB\calT_\setA\calB_\setB.
\label{eq:generalized time-band limiting operator}\end{align}
Utilizing the expression for $\calB_\setB$, the operator $\calT_\setA\calB_\setB\calT_\setA$ acts on any $x\in L_2(\setG)$ as follows
\[
\left(\calT_\setA\calB_\setB\calT_\setA x\right)(g) =
\begin{cases}\int_\setA K_\setB(h^{-1}\circ g)   x(h) \dif h, & g\in \setA\\
 0, & \text{otherwise}. \end{cases}
\]
The operator $\calO_{\setA,\setB}$ is symmetric and completely continuous and we denote its eigenvalues by $\lambda_{\ell}(\calO_{\setA,\setB})$.
Due to the time- and band-limiting characteristics of the operator $\calO_{\setA,\setB}$, the eigenvalues of $\calO_{\setA,\setB}$ are between $0$ and $1$. To see this, let $x(g)\in L_2(\setA)$:
\begin{align*}
&\left\langle (\calO_{\setA,\setB}x)(g), x(g) \right\rangle = \left\langle \int_\setA\int_\setB \chi_\xi(h^{-1}\circ g) \dif \xi   x(h) \dif h, x(g) \right\rangle  \\ &= \int_\setB \left( \int_\setA\int_\setA \chi_\xi(h^{-1}\circ g)    x(h)  x^*(g) \dif h \dif g\right) \dif \xi = \int_\setB   \left|\widehat x(\xi)\right|^2  \dif \xi \geq 0.
\end{align*}
On the other hand, we have $
\int_\setB   \left|\widehat x(\xi)\right|^2  \dif \xi \leq \int_{\widehat\setG}   \left|\widehat x(\xi)\right|^2  \dif \xi = \int_{\setG}   \left|x(g)\right|^2  \dif g.$


\subsection{Eigenvalue distribution of time-frequency limiting operators}
To investigate the eigenvalues of the operator $\calO_{\setA,\setB} = \calT_\setA\calB_\setB\calT_\setA$, we first note that without the time-limiting operator $\calT_\setA$, the eigenvalues of $\calB_\setB$ are simply given by the Fourier transform of $K_\setB(g)$, and thus they are either 1 or 0. Our main question is how the spectrum of the time-frequency limiting operator relates to the spectrum of the band-limited operator. Based on the binary spectrum of $\calB_\setB$ and the intuition from Szeg{\H o}'s theorem in \Cref{thm:general szego}, we expect that the eigenvalues of $\calO_{\setA,\setB}$ to have a particular behavior: when sorted by magnitude, there should be a cluster of eigenvalues close to (but not exceeding) $1$, followed by an abrupt transition, after which the remaining eigenvalues should be close to $0$.  Moreover, the number of effective (i.e., relatively large) eigenvalues should be essentially equal to the time-frequency area $|\setA||\setB|$. These results are confirmed below and reveal the dimensionality (or the number of degrees of freedom) of classes of band-limited signals observed over a finite time, which is fundamental to characterizing the performance limits of communication systems.

We note that similar to how we discussed the cases where $T \rightarrow \infty$ and $N \rightarrow \infty$ in Section~\ref{sec:effects of time-limiting}, we will use $\setA_\tau, \tau\in(0,\infty)$ to define the subsets of $\setG$ that depend on $\tau$. The subscript $\tau$ is often dropped when it is clear from the context. We now present one of our main results concerning the asymptotic behavior for the eigenvalues of the time-frequency limiting operators $\calO_{\setA_\tau,\setB}$ when $\setA_\tau$ approaches $\setG$.
\begin{thm} Suppose $\setB$ is a fixed subset of $\widehat\setG$ and let $\epsilon \in (0,\frac{1}{2})$.
 Let
 \[
 \calN(\calO_{\setA_\tau,\setB};(a,b)):=\#\left\{\ell:a<\lambda_{\ell}(\calO_{\setA_\tau,\setB})<b\right\}
 \]
 denote the number of eigenvalues of $\calT_{\setA_\tau}\calB_\setB\calT_{\setA_\tau}$ that are between $a$ and $b$. Then if
\begin{align}
\lim_{\tau\rightarrow \infty}\setA_\tau = \setG
\label{eq:cond A_tau}\end{align}
holds almost everywhere, we have
\begin{align}
\sum_{\ell} \lambda_{\ell}(\calO_{\setA_\tau,\setB}) = |\setA_\tau| |\setB|,\quad
\sum_{\ell}\lambda^2_{\ell}(\calO_{\setA_\tau,\setB}) = |\setA_\tau| |\setB| - o(|\setA_\tau| |\setB|),\label{eq:sum of squares}
\end{align}
and
\begin{align}
\lim_{\tau\rightarrow \infty}\frac{\calN(\calO_{\setA_\tau,\setB};[1-\epsilon,1])}{|\setA_\tau|} = |\setB|, \quad \calN(\calO_{\setA_\tau,\setB};(\epsilon,1-\epsilon)) = o\left(\frac{|\setA_\tau| |\setB|}{\epsilon(1-\epsilon)}\right).\label{eq:transition region}
\end{align}
\revise{Here $|\cdot|$ denotes the Haar measure.}
\label{thm:limiting operator}\end{thm}
The proof of Theorem~\ref{thm:limiting operator} is given in Appendix~\ref{sec:prf thm limiting operator}. \revise{The limit in \eqref{eq:cond A_tau} is in the sense of convergence in measure on each compact set of $\setG$. There are no specific shape constraints in $\setA_\tau$ except that their boundaries are measure zero, though we note that in many cases of interest $\setA_\tau$ is a closed set as in bandlimited signals \cite{landau1975szego}, a union of closed sets as in multiband signals \cite{zhu2015approximating}, or a scaling of a fixed set as in \cite{franceschetti2015landau} where $\mQ_\tau\setA = \{\mQ_\tau g: g\in\setA\}$ where $\setG$ is $\R^n$, $\setA$ is a fixed set of $\R^{n}$ with boundary of measure zero, and $\mQ_\tau\in\R^{n\times n}$ depends on $\tau$ such that $\lim_{\tau\rightarrow\infty} \mQ_{\tau}\setA = \R^n$.} Theorem~\ref{thm:limiting operator} formally confirms that the spectra of the time-frequency limiting operators resemble the rectangular shape of the spectrum of the band-limiting operator. As guaranteed by~\eqref{eq:transition region}, the number of effective eigenvalues of the time-frequency limiting operator is asymptotically equal to the time-frequency area $|\setA_\tau||\setB|$. Similar results for time-frequency limiting operators in the context of classical groups where $\setG$ is $\R^{n}$ are given in \cite{landau1975szego,franceschetti2015landau}. We discuss the applications of Theorem~\ref{thm:limiting operator} in  channel capacity and representation and approximation of signals in more detail in the following subsections.

As mentioned before, the time-frequency limiting operators in the context of the classical groups where $\setG$ are the real-line, $\Z$, and $\Z_N$ were first studied by Landau, Pollak, and Slepian who wrote a series of papers regarding the dimensionality of time-limited signals that are approximately band-limited (or vice versa)~\cite{SlepiP_ProlateI,LandaP_ProlateII,LandaP_ProlateIII,Slepi_ProlateIV,Slepian78DPSS} (see also~\cite{Slepi_On,Slepi_Some} for concise overviews of this body of work). After that, a set of results concerning the number of eigenvalues within the transition region $(0,1)$ have been established in~\cite{landau1980eigenvalue,EdelmanMT1998FutureFFT,osipov2013certain,Israel2015EigenvalueDisTFLocalization,Karnik2016FAST,zhu2017DFT}. which will be reviewed in detail in the following remarks.

\begin{remark}
Using the explicit expressions for the character function $\chi_\xi(g)$ and the kernel $K_\setB(g)$ and applying integration by parts for \eqref{eq:sum of squares KB}, one can improve the second term in \eqref{eq:sum of squares} to $O(\log(|\setA_\tau||\setB|))$ for many common one-dimensional cases:

\begin{itemize}[leftmargin=*]
\item  Suppose $\setG = \R$ and $\widehat \setG = \R$. Let $\setA_T = [-\frac{T}{2},\frac{T}{2}]$ (where $\tau = T$ in \eqref{eq:cond A_tau}) and $\setB = [-\frac{1}{2},\frac{1}{2}]$ without loss of generality. Then the kernel $K_\setB(t)$ turns out to be $
    K_\setB(t) = \int_{-\frac{1}{2}}^{\frac{1}{2}} e^{j2\pi Ft}\dif F = \frac{2\sin(\pi t)}{\pi t}.$
Plugging in this form into \eqref{eq:sum of squares KB} gives~\cite{hogan2011bookDurationBandLimiting} $
    \sum_\ell(\lambda_{\ell}(\calO_{\setA_T,\setB}))^2 = T -O(\log(T)).$
In this case, the operator $\calO_{\setA_\T,\setB}$ is equivalent to the time-limited Topelitz operator $\calH_T$ in Section~\ref{sec:tfintro} and the corresponding eigenfunctions are known as PSWFs.

\item As an another example, suppose $\setG = \Z$, $\widehat \setG = [-\frac{1}{2},\frac{1}{2}]$ and let $\setA_N = \{0,1\ldots,N-1\}$ (where $\tau = N$ in \eqref{eq:cond A_tau}), $\setB = [-W,W]$ with $W\in(0,\frac{1}{2})$. In this case, the kernel $K_\setB(n)$ becomes $
    K_\setB[n] = \int_{-W}^{W} e^{j2\pi fn}\dif f = \frac{\sin(2\pi Wn)}{\pi Wn}.$
Then plugging in this form into \eqref{eq:sum of squares KB} gives~\cite[Theorem 3.2]{zhu2015approximating}
 $
 \sum_\ell(\lambda_{\ell}(\calO_{\setA_N,\setB}))^2 = 2NW - O(\log(2NW)).
 $
We note that in this case, the operator $\calO_{\setA_N,\setB}$ is equivalent to the $N\times N$ prolate matrix $\mB_{N,W}$ with entries
 \begin{align}
 \mB_{N,W}[m,n] := \frac{\sin\left(2\pi W(m-n)\right)}{\pi(m-n)}
 \label{eq:prolate matrix B_NW}\end{align}
for all $m,n\in\{0,1,\ldots,N-1\}$. The eigenvalues and eigenvectors of the matrix $\mB_{N,W}$ are referred to as the DPSS eigenvalues and DPSS vectors, respectively.

\item As a final example, we consider $\setG =\Z_N$, $\widehat\setG = \Z_N$ and the Fourier transform is the conventional DFT. Suppose $M,K\leq N$. Let $\tau = M$ in \eqref{eq:cond A_tau} and
\begin{align}
\setA_M = \left\{0,1,\ldots,M-1\right\}, \ \setB = \left\{0,1,\ldots,K-1\right\},
\label{eq:set for DFT limiting}\end{align}
In this case, $\chi_k[n] = e^{j2\pi\frac{nk}{N}}$ and the kernel $K_\setB[n]$ is $
     K_\setB[n] = \sum_{k=0}^{K-1}e^{j2\pi\frac{nk}{N}} = e^{j\pi n\frac{K-1}{N}}\frac{\sin(\pi\frac{nK}{N})}{\sin(\pi\frac{n}{N})}.$
Then plugging in this form into \eqref{eq:sum of squares KB} gives~\cite{EdelmanMT1998FutureFFT,zhu2017DFT}
$
 \sum_\ell(\lambda_{\ell}(\calO_{\setA_M,\setB}))^2 = \frac{MK}{N} - O(\log(\frac{MK}{N} )).$
\end{itemize}
Through the above examples, one may wonder whether we can in general replace the second term in \eqref{eq:sum of squares} by $O(\log(|\setA_\tau||\setB|))$ with a finer analysis of $\sum_\ell(\lambda_{\ell}(\calO_{\setA_\tau,\setB}))^2$. We utilize a two-dimensional example to answer this question in the negative: Suppose $\setG = \Z^2$, $\widehat \setG = [-\frac{1}{2},\frac{1}{2}]\revise{\times [-\frac{1}{2},\frac{1}{2}]}$ and let $\setA = \{0,1\ldots,N-1\}\times \{0,1\ldots,N-1\}, \setB = [-W,W]\times [-W,W]$ with $W\in(0,\frac{1}{2})$. In this case, the kernel $K_{\setB}[n_1,n_2]$ is
\[
    K_\setB[n_1,n_2] = \int_{-W}^{W}\int_{-W}^{W} e^{j2\pi f_1n_1}e^{j2\pi f_2n_2}\dif f_1\dif f_2 = \frac{\sin(2\pi Wn_1)}{\pi Wn_1}\frac{\sin(2\pi Wn_2)}{\pi Wn_2}.
\]
The eigenvectors of the corresponding operator $\calO_{\setA,\setB}$ are known as the two-dimensional DPSSs. For this case, we have $
 \sum_\ell(\lambda_{\ell}(\calO_{\setA_\tau,\setB}))^2 = 4N^2W^2 - O(NW\log(NW)).$
In other words, in this case, we can only improve the second term in \eqref{eq:sum of squares} to $O(NW\log(NW))$ rather than $O(\log(4N^2W^2))$.
\end{remark}

\begin{remark}
We note that the transition region in \eqref{eq:transition region} depends on $\epsilon$ in the form of $\frac{1}{\epsilon (1-\epsilon)}$. A better understanding of the transition region requires further complicated analysis. In the literature, finer results on the transition region are known for several common cases:
\begin{itemize}[leftmargin=*]
\item The results for the eigenvalue distribution of the continuous time-frequency localization operator (where $\setG = \R$, $\widehat \setG = \R$, $\setA_T = [-\frac{T}{2},\frac{T}{2}]$  and $\setB = [-\frac{1}{2},\frac{1}{2}]$) has a rich history. As one example, for any $\epsilon\in(0,1)$, Landau and Widom~\cite{landau1980eigenvalue} provided the following asymptotic result 
$
\calN(\calO_{\setA_T,\setB};[\epsilon,1]) = T + \left(\frac{1}{\pi^2}\log\frac{1-\epsilon}{\epsilon}  \right)\log\frac{\pi T}{2} + o\left(\log\frac{\pi T}{2} \right).
$
This asymptotic result ensures the $O(\log(\frac{1}{\epsilon})\log(T))$ dependence on $\epsilon$ and time-frequency area $T$. Recently, Osipov~\cite{osipov2013certain} proved that
$
\calN(\calO_{\setA_T,\setB};[\epsilon,1]) \leq T + C\log(T)^2\log(1/\epsilon),
$
 where $C$ is a constant.  Israel~\cite{Israel2015EigenvalueDisTFLocalization} provided a non-asymptotic bound on the number of eigenvalues in the transition region. Fix $\eta\in(0,1/2]$. Given $\epsilon\in(0,1/2)$ and $T\geq 2$, then~\cite{Israel2015EigenvalueDisTFLocalization}
\begin{align}
\calN(\calO_{\setA_T,\setB};(\epsilon,1-\epsilon))\leq 2 C_\eta\left(\log\left(\frac{\log T}{\epsilon}\right)\right)^{1+\eta}\log\left(\frac{T}{\epsilon}\right),
\label{eq:nonasymptotic PSWF}\end{align}
where $C_\eta$ is a constant dependent on $\eta\in(0,\frac{1}{2}]$.

\item The earliest result on the eigenvalue distribution of the discrete time-frequency localization operator (where $\setG = \Z$, $\widehat \setG = [-\frac{1}{2},\frac{1}{2})$, $\setA_N = \left\{0,1,\ldots,N-1\right\}$  and $\setB = [-W,W]$ with $W\in(0,\frac{1}{2})$) comes from Slepian~\cite{Slepian78DPSS}, who showed that for any $b\in\R$, asymptotically the DPSS eigenvalue $\lambda_\ell (\calO(\setA_N,\setB))\rightarrow \frac{1}{1+e^{\pi b}}$ as $N\rightarrow \infty$ if $\ell =\lfloor 2NW + \frac{b}{\pi}\log N\rfloor$. This implies the asymptotic result: $
\calN(\calO_{\setA_N,\setB};(\epsilon,1-\epsilon))\sim \frac{2}{\pi^2}\log N\log\left(\frac{1}{\epsilon} -1\right).$
Recently, by examining the difference between the operator $\calO_{\setA_N,\setB}$ and the one formed by a partial DFT matrix, we have shown~\cite{KarnikZWRD_Fast,Karnik2016FAST} the following nonasymptotic result characterizing the $O(\log N\log{\frac{1}{\epsilon}})$ dependence:
\begin{align}
\calN(\calO_{\setA_N,\setB};(\epsilon,1-\epsilon)) \leq \left(\frac{8}{\pi^2} \log(8N) + 12\right) \log \left( \frac{15}{\eps} \right).
\label{eq:nonasymptotic DPSS}\end{align}
The right hand side is further improved to $\frac{2}{\pi^2}\log(4N)\left( \frac{4}{\eps(1-\eps)} \right)$ in \cite{karnik2020improved}.
\item We~\cite{zhu2017DFT} have also provided similar results for the eigenvalue distribution of discrete periodic time-frequency localization operator with sets $\setA_M$ and $\setB$ defined in \eqref{eq:set for DFT limiting}:
 \begin{align}
 	\calN(\calO_{\setA_M,\setB};(\epsilon,1-\epsilon)) \leq \left(\frac{8}{\pi^2} \log(8N) + 12\right) \log \left( \frac{15}{\eps} \right) + 4\max\Big(\frac{-\log \left( \frac{\pi}{32}\left(\left(\frac{M}{N}\right)^2-1\right) \epsilon \right)}{\log\left(\frac{M}{N}\right)},0\Big).
 \label{eq:nonasymptotic DPSS-v2}\end{align}
\end{itemize}
\end{remark}
\begin{remark}
It is also of particular interest to have a finer result on the number of eigenvalues that is greater than $\frac{1}{2}$ since this together with the size of the transition region gives us a complete understanding of the eigenvalue distribution.
\begin{itemize}[leftmargin=*]
\item Landau~\cite{landau1993density} establishes the number of PSWF eigenvalues that are greater than $\frac{1}{2}$ as follows
\begin{align}
\lambda_{(\lfloor T \rfloor-1)}(\calO_{\setA_T,\setB})\geq \frac{1}{2}\geq \lambda_{(\lceil T\rceil)}(\calO_{\setA_T,\setB}).
\label{eq:eig near 1/2}\end{align}
\item We~\cite{zhu2015approximating} provided a similar result for the DPSS eigenvalues.
\end{itemize}
\end{remark}

In the following two subsections, we review some applications of Theorem~\ref{thm:limiting operator}.


\subsection{Application: Communications}
\label{sec:tfapplicationscomm}

In \cite{franceschetti2015landau}, Franceschetti extended Landau's theorem~\cite{landau1975szego} for simple time and frequency intervals to other time and frequency sets of complicated shapes. Lim and Franceschetti~\cite{lim2017information} related  the number of degrees of freedom of the space of band-limited signals to the deterministic notions of capacity and entropy. Now we apply Theorem~\ref{thm:limiting operator} to the effective dimensionality of the ``band-limited signals'' observed over a finite set $\setA$ by utilizing the result in Section~\ref{sec:number of degrees}. To that end, we plug $\widehat \phi(\xi) = 1_{\setB}(\xi) = \left\{\begin{matrix} 1,& \xi \in \setB\\ 0, & \xi\notin \setB\end{matrix}\right.$, the indicator function on $\setB$, into \eqref{eq:set W} and get the following set of band-limited functions observed only over $\setA$:
\begin{align*}
\setW(\setA,1_{\setB}(\xi)) := \left\{x\in L_2(\setA): x(g) = \int_{\setB} \alpha(\xi)\chi_\xi(g) \dif \xi, \int \left|\alpha(\xi)\right|^2\leq 1, g\in \setA \right\}.
\end{align*}
When $\setA\subset \R^2$ represents a subset of time and space,
the number of degrees of freedom in the set $\setW(\setA,1_{\setB}(\xi))$ determines the total amount of information that can be transmitted in time and space by multiple-scattered electromagnetic waves~\cite{franceschetti2015landau}. Now we turn to compute the effective dimensionality of the general set $\setW(\setA,1_{\setB}(\xi))$. In this case, $|\widehat \phi (\xi)|^2 = 1_\setB(\xi)$ and the corresponding operator $\calA\calA^*$ defined in~\eqref{eq:AAadjoint} is equivalent to the time-frequency limiting operator $\calO_{\setA,\setB}$ in \eqref{eq:generalized time-band limiting operator}. Now Proposition~\ref{prop: n-width} implies that the effective dimensionality
$\calN(\setW(\setA,\widehat\phi(\xi)),\epsilon)$ is equal to the number of eigenvalues of $\calO_{\setA,\setB}$ that are greater than $\epsilon$,  which is given by Theorem~\ref{thm:limiting operator}. In words, the effective dimensionality of the set $\setW(\setA,1_{\setB}(\xi))$ is essentially $|\setA||\setB|$, and is insensitive to the level $\epsilon$ (as illustrated in \eqref{eq:nonasymptotic PSWF}-\eqref{eq:nonasymptotic DPSS-v2}, in many cases, this dimensionality only has $\log(\frac{1}{\epsilon})$ dependence on $\epsilon$).

\subsection{Application: Signal representation}
\label{sec:tfapplicationssignalrep}

In addition to the eigenvalues of the time-frequency limiting operator $\calT_{\setA}\calB_\setB\calT_{\setA}$, the eigenfunctions of  $\calT_{\setA}\calB_\setB\calT_{\setA}$ are also of significant importance, owing to their concentration in the time and frequency domains. To see this, let $u_\ell(g)$ be the $\ell$-th eigenfunction of $\calT_{\setA}\calB_\setB\calT_{\setA}$, corresponding to the $\ell$-th eigenvalue $\lambda_\ell (\calT_{\setA}\calB_\setB\calT_{\setA})$. Denoting the Fourier transform of $u_\ell(g)$ by $\widehat u_\ell(\xi)$,
we have
\begin{equation}\begin{split}
\int_{\setB}|\widehat u_\ell(\xi)|^2\dif \xi &= \left\langle \calT_\setB \calF u_\ell, \calT_\setB \calF u_\ell \right\rangle = \left\langle \calF^{-1}\calT_\setB \calF u_\ell,  u_\ell \right\rangle = \left\langle \calF^{-1}\calT_\setB \calF \calT_\setA u_\ell,  \calT_\setA u_\ell \right\rangle\\
& = \left\langle \calT_\setA\calF^{-1}\calT_\setB \calF \calT_\setA u_\ell,  \calT_\setA u_\ell \right\rangle = \lambda_\ell(\calT_\setA\calB_\setB\calT_\setA)\|\calT_\setA u_\ell\|^2,
\end{split}\label{eq:concen freq}\end{equation}
where the third equality follows because $u_\ell(g)$ is a time-limited signal (i.e., $\calT_\setA (u_\ell) = u_\ell$), and the last equality utilizes $\calT_\setA\calF^{-1}\calT_\setB \calF \calT_\setA u_\ell = \calT_\setA\calB_\setB\calT_\setA u_\ell = \lambda_\ell (\calT_{\setA}\calB_\setB\calT_{\setA})u_\ell$. In words, \eqref{eq:concen freq} states that the eigenfunctions $u_\ell$ have a proportion $\lambda_\ell(\calT_\setA\calB_\setB\calA_\setA)$ of energy within the band $\setB$, implying that even though the eigenfunctions are not exactly band-limited, their Fourier transform is mostly concentrated in the band $\setB$ when $\lambda_\ell(\calT_\setA\calB_\setB\calA_\setA)$ is close to 1. Thus, the first $\approx |\setA||\setB|$ eigenfunctions can be utilized as window functions for spectral estimation, and as a highly efficient basis for representing band-limited signals that are observed over a finite set $\setA$.

Recall that $\setW(\setA,1_{\setB}(\xi))$ (defined in~\eqref{eq:set W}) consists of band-limited signals observed over a finite set $\setA$. Applying Proposition~\ref{prop: n-width}, we compute the $n$-width of the set $\setW(\setA,1_{\setB}(\xi))$ as follows:
\[
d_n(\setW(\setA,1_{\setB}(\xi))) = \sqrt{\lambda_n(\calA\calA^*)} = \sqrt{\lambda_n(\calO_{\setA,\setB})}.
\]
By the definition of \eqref{eq:def n-width}, we know for any $x(g)\in \setW(\setA,1_{\setB}(\xi))$,
\[
\int_{\setA} \left| x(g)  - (\P_{\setU_n}x)(g) \right|^2 \dif g \leq \sqrt{\lambda_n(\calO_{\setA,\setB})},
\]
where $\setU_n$ is the subspace spanned by the first $n$ eigenvectors of $\calO_{\setA_\tau,\setB}$, i.e.,
\begin{align}
\setU_n := \text{span}\{u_0(g),u_1(g),\ldots,u_{n-1}(g)\}.
\label{eq:subspace U}\end{align}
Now we utilize Theorem~\ref{thm:limiting operator} to conclude that the representation residual $\sqrt{\lambda_n(\calO_{\setA,\setB})}$ is very small when $n$ is chosen slightly larger than $|\setA||\setB|$.

We now investigate the basis $\setU_n$ for representing time-limited version of characters $\chi_\xi(g)$ and band-limited signals.

\subsubsection{Approximation quality for time-limited  characters $\chi_\xi(g)$ }

We first restrict our focus to the simplest possible ``band-limited signals'' that are observed over a finite period: pure characters $\chi_\xi(g)$ when $g$ is limited to $\setA$. Without knowing the exact frequency $\xi$ in advance, we attempt to find an efficient low-dimensional basis for capturing the energy in any signal $\chi_\xi(g)$. To that end, we let $\setM_n\revise{\subset} L_2(\setA)$ denote an $n$-dimensional subspace of $L_2(\setA)$. We would like to minimize
\begin{align}
\int_{\setB} \|\chi_\xi - \P_{\setM_n}\chi_\xi\|_{L_2(\setA)}^2 \dif \xi.
\label{eq:best subsapce}\end{align}
The following result establishes the degree of approximation accuracy in a mean-squared error (MSE) sense provided by the subspace $\setU_n$ for representing the ``time-limited'' version of characters $\chi_\xi(g)$ (where $g$ is limited to $\setA_\tau$).

\begin{thm}\label{thm:subspace for characters}
For any $n\in \Z^+$, the optimal $n$-dimensional subspace which minimizes~\eqref{eq:best subsapce} is $\setU_n$. Furthermore, with this choice of subspace, we have
\begin{align*}
\frac{1}{|\setB|}\int_{\setB} \frac{\|\chi_\xi - \P_{\setU_n}\chi_\xi\|_{L_2(\setA)}^2}{\|\chi_\xi\|_{L_2(\setA)}^2} \dif \xi  = 1 -\frac{ \sum_{\ell=0}^{n-1}\lambda_{\ell}(\calO_{\setA,\setB})}{|\setA||\setB|},
\end{align*}
\revise{where $|\cdot|$ denotes the Haar measure.}
\end{thm}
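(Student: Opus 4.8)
The plan is to reduce the stated minimization to a classical extremal eigenvalue problem for the compact self-adjoint operator $\calO_{\setA_\tau,\setB}$. First I would record that since $|\chi_\xi(g)|=1$ for every $g$, the denominator is a constant, $\|\chi_\xi\|_{L_2(\setA_\tau)}^2 = \int_{\setA_\tau}\dif g = |\setA_\tau|$, independent of $\xi$. Next, because $\P_{\setM_n}$ is an orthogonal projection on $L_2(\setA_\tau)$, the Pythagorean identity gives $\|\chi_\xi - \P_{\setM_n}\chi_\xi\|_{L_2(\setA_\tau)}^2 = \|\chi_\xi\|_{L_2(\setA_\tau)}^2 - \|\P_{\setM_n}\chi_\xi\|_{L_2(\setA_\tau)}^2$, so that
\[
\int_\setB \|\chi_\xi - \P_{\setM_n}\chi_\xi\|_{L_2(\setA_\tau)}^2 \dif\xi = |\setA_\tau||\setB| - \int_\setB \|\P_{\setM_n}\chi_\xi\|_{L_2(\setA_\tau)}^2 \dif\xi.
\]
Hence minimizing \eqref{eq:best subsapce} over all $n$-dimensional $\setM_n$ is equivalent to \emph{maximizing} the projected energy $\int_\setB \|\P_{\setM_n}\chi_\xi\|_{L_2(\setA_\tau)}^2 \dif\xi$.

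Second, I would rewrite this projected energy as a trace-type quadratic form in $\calO_{\setA_\tau,\setB}$. Fixing an orthonormal basis $\{v_1,\dots,v_n\}$ of $\setM_n\subset L_2(\setA_\tau)$, we have $\|\P_{\setM_n}\chi_\xi\|_{L_2(\setA_\tau)}^2 = \sum_{k=1}^n \left|\left\langle \chi_\xi, v_k\right\rangle_{L_2(\setA_\tau)}\right|^2$. The key observation is that, extending each $v_k$ by zero outside $\setA_\tau$, one has $\left|\left\langle \chi_\xi, v_k\right\rangle_{L_2(\setA_\tau)}\right|^2 = |\widehat{v_k}(\xi)|^2$, so integrating over $\setB$ and repeating the energy-concentration computation of \eqref{eq:concen freq} through its fourth line—using that each $v_k$ is time-limited to $\setA_\tau$, i.e.\ $\calT_{\setA_\tau}v_k = v_k$—yields $\int_\setB |\widehat{v_k}(\xi)|^2\dif\xi = \left\langle \calO_{\setA_\tau,\setB} v_k, v_k\right\rangle$. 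Summing over $k$ gives
\[
\int_\setB \|\P_{\setM_n}\chi_\xi\|_{L_2(\setA_\tau)}^2 \dif\xi = \sum_{k=1}^n \left\langle \calO_{\setA_\tau,\setB} v_k, v_k\right\rangle.
\]

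Third, I would invoke the Ky Fan extremal principle: for the symmetric, completely continuous (hence compact self-adjoint) operator $\calO_{\setA_\tau,\setB}$, the maximum of $\sum_{k=1}^n \left\langle \calO_{\setA_\tau,\setB} v_k, v_k\right\rangle$ over all orthonormal families $\{v_1,\dots,v_n\}$ equals the sum of the $n$ largest eigenvalues $\sum_{\ell=0}^{n-1}\lambda_\ell(\calO_{\setA_\tau,\setB})$, and is attained by taking $v_k = u_{k-1}$, i.e.\ $\setM_n = \setU_n$ from \eqref{eq:subspace U}. Combined with the reduction above, this shows simultaneously that $\setU_n$ is optimal and that the minimal residual equals $|\setA_\tau||\setB| - \sum_{\ell=0}^{n-1}\lambda_\ell(\calO_{\setA_\tau,\setB})$; dividing by $|\setA_\tau||\setB|$ and substituting $\|\chi_\xi\|_{L_2(\setA_\tau)}^2 = |\setA_\tau|$ produces the displayed formula. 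The main obstacle is the careful justification of the Ky Fan principle in this infinite-dimensional setting—confirming that the supremum is attained and is correctly identified with the leading eigenfunctions of a compact self-adjoint operator—while the remaining steps are routine Hilbert-space manipulations, the crucial energy-concentration identity being already essentially contained in \eqref{eq:concen freq}.
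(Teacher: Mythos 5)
Your proposal is correct, and it takes a recognizably different route from the paper's proof, though both ultimately reduce the problem to a trace-extremal statement for the compact, self-adjoint, positive operator $\calO_{\setA_\tau,\setB}$. The paper works directly with the residual: it expands $\|\chi_\xi-\P_{\setM_n}\chi_\xi\|_{L_2(\setA_\tau)}^2$ in the eigenbasis $\{u_\ell\}$, interchanges the integral over $\setB$ with the sum (via monotone convergence) so that the kernel $\int_\setB \chi_\xi(g)\chi_\xi^*(h)\dif\xi = K_\setB(h^{-1}\circ g)$ materializes the operator applied to the $u_\ell$'s, arrives at the objective $\sum_\ell \lambda_\ell\left\langle(I-\P_{\setM_n})u_\ell,u_\ell\right\rangle$, asserts that $\setU_n$ is the minimizer, and finally converts the tail sum $\sum_{\ell\geq n}\lambda_\ell$ into $|\setA_\tau||\setB|-\sum_{\ell=0}^{n-1}\lambda_\ell$ by invoking the trace identity \eqref{eq:sum of eigenvalues} from Theorem~\ref{thm:limiting operator}. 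You instead work with the projected energy: Pythagoras converts the minimization into maximizing $\sum_{k=1}^n\left\langle\calO_{\setA_\tau,\setB}v_k,v_k\right\rangle$ over orthonormal families in $L_2(\setA_\tau)$ (the operator emerging through \eqref{eq:concen freq} applied to the $v_k$ rather than through a kernel interchange), and Ky Fan's maximum principle finishes the argument. The two objectives are dual — $\operatorname{trace}\left((I-\P)\calO\right)$ versus $\operatorname{trace}\left(\P\,\calO\right)$ — so the mathematical content is close, but your version has two concrete advantages: it names and isolates the extremal principle that the paper's proof actually needs but leaves implicit (the sentence ``Thus, we conclude that the optimal $n$-dimensional subspace \ldots is $\setU_n$'' is exactly the Ky Fan step, stated without justification), and it obtains the normalizing constant $|\setA_\tau||\setB|$ directly from $\int_\setB\|\chi_\xi\|_{L_2(\setA_\tau)}^2\dif\xi$ rather than importing \eqref{eq:sum of eigenvalues}, so your proof is self-contained and does not rely on Theorem~\ref{thm:limiting operator} or on completeness of $\{u_\ell\}$ (Bessel's inequality suffices for the Ky Fan upper bound). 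The infinite-dimensional Ky Fan principle you flag as the main obstacle is indeed the crux, but it is standard for compact positive self-adjoint operators: writing $\left\langle\calO v,v\right\rangle=\sum_\ell\lambda_\ell\left|\left\langle v,u_\ell\right\rangle\right|^2$ and setting $c_\ell=\sum_k\left|\left\langle u_\ell,v_k\right\rangle\right|^2$, one has $0\leq c_\ell\leq 1$ and $\sum_\ell c_\ell\leq n$, whence $\sum_k\left\langle\calO v_k,v_k\right\rangle=\sum_\ell\lambda_\ell c_\ell\leq\sum_{\ell=0}^{n-1}\lambda_\ell$, with equality at $v_k=u_{k-1}$.
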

The proof of Theorem~\ref{thm:subspace for characters} is given in Appendix~\ref{sec:prf subspace for characters}. Combined with Theorem~\ref{thm:limiting operator}, Theorem~\ref{thm:subspace for characters} implies that by choosing $n\approx |\setA||\setB|$, on average the subspace spanned by the first $n$ eigenfunctions of $\calT_{\setA}\calB_\setB\calT_{\setA}$ is expected to accurately represent time-limited characters within the band of interest. We note that the representation guarantee for time-limited characters $\{\calT_{\setA}\chi_\xi,\xi \in \setB\}$ can also be used for most band-limited signals that are observed over a finite set $\setA$. To see this, suppose $x(g)$ is a band-limited function which can be represented as
\[
x(g) = \int_\setB \widehat x(\xi) \chi_\xi(g)\dif \xi.
\]
An immediate consequence of the above equation is that one can view $\{\calT_{\setA}\chi_\xi,\xi \in \setB\}$ as the atoms for building $\calT_{\setA} x$:
\[
\calT_{\setA} x = \int_\setB \widehat x(\xi) \calT_{\setA}\chi_\xi\dif \xi.
\]

\subsubsection{Approximation quality for random band-limited signals}
We can also approach the representation ability of the subspace $\setU_n$ (defined in \eqref{eq:subspace U}) from a probabilistic perspective \revise{by generalizing \cite[Theorem 4.1]{DavenportWakin2012CSDPSS}.}
\begin{thm}\label{thm:random approach}
Let $x(g) = \chi_\xi(g), g\in \setA_\tau$ be a random function where $\xi$ is a random variable with uniform distribution on $\setB$.
Then we have
\[
\frac{\E\left[\|x - \P_{\setU_n}x\|_{L_2(\setA)}^2 \right]}{\E\left[\|x \|_{L_2(\setA)}^2 \right]} = 1 - \frac{\sum_{\ell=0}^{n-1}\lambda_{\ell}(\calO_{\setA,\setB})}{|\setA||\setB|}.
\]
\end{thm}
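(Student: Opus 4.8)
The plan is to deduce this statement as an almost immediate consequence of Theorem~\ref{thm:subspace for characters}. The key observation is that the restriction of any character to $\setA_\tau$ carries the same energy, independent of the frequency: since $|\chi_\xi(g)| = 1$ for every $g \in \setG$,
\[
\|\chi_\xi\|_{L_2(\setA_\tau)}^2 = \int_{\setA_\tau} |\chi_\xi(g)|^2 \dif g = \int_{\setA_\tau} 1 \dif g = |\setA_\tau|
\]
for every $\xi \in \setB$. This deterministic denominator is what allows the ratio of expectations appearing in the statement to be converted into an average of pointwise ratios.

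First I would write both expectations as integrals against the uniform density $\frac{1}{|\setB|}$ on $\setB$. Because $x = \chi_\xi$ with $\xi$ drawn uniformly from $\setB$, the constant-norm identity above gives
\[
\E\left[\|x\|_{L_2(\setA_\tau)}^2\right] = \frac{1}{|\setB|}\int_{\setB} \|\chi_\xi\|_{L_2(\setA_\tau)}^2 \dif \xi = |\setA_\tau|,
\]
while the numerator is $\E\left[\|x - \P_{\setU_n}x\|_{L_2(\setA_\tau)}^2\right] = \frac{1}{|\setB|}\int_{\setB} \|\chi_\xi - \P_{\setU_n}\chi_\xi\|_{L_2(\setA_\tau)}^2 \dif \xi$.

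Next I would form the ratio. Since the denominator $\|\chi_\xi\|_{L_2(\setA_\tau)}^2 = |\setA_\tau|$ is a nonzero constant, it can be pulled inside the integral in the numerator, so that
\[
\frac{\E\left[\|x - \P_{\setU_n}x\|_{L_2(\setA_\tau)}^2\right]}{\E\left[\|x\|_{L_2(\setA_\tau)}^2\right]} = \frac{1}{|\setB|}\int_{\setB} \frac{\|\chi_\xi - \P_{\setU_n}\chi_\xi\|_{L_2(\setA_\tau)}^2}{\|\chi_\xi\|_{L_2(\setA_\tau)}^2} \dif \xi.
\]
The right-hand side is precisely the quantity computed in Theorem~\ref{thm:subspace for characters}, namely $1 - \frac{\sum_{\ell=0}^{n-1}\lambda_\ell(\calO_{\setA_\tau,\setB})}{|\setA_\tau||\setB|}$, which yields the claim.

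There is no genuine obstacle here; the only point demanding care is that in general the ratio of expectations differs from the expectation of the ratio, so one must verify that the denominator is deterministic before invoking Theorem~\ref{thm:subspace for characters}. This is exactly what the constant-norm property of characters supplies. As an independent derivation that bypasses Theorem~\ref{thm:subspace for characters}, one could instead expand $\P_{\setU_n}\chi_\xi = \sum_{\ell=0}^{n-1}\langle \chi_\xi, u_\ell\rangle u_\ell$ in the orthonormal eigenbasis of $\calO_{\setA_\tau,\setB}$, observe that $|\langle \chi_\xi, u_\ell\rangle|^2 = |\widehat u_\ell(\xi)|^2$, and integrate using $\int_{\setB} |\widehat u_\ell(\xi)|^2 \dif \xi = \lambda_\ell(\calO_{\setA_\tau,\setB})$ from~\eqref{eq:concen freq} (recalling $\calT_{\setA_\tau} u_\ell = u_\ell$ and $\|u_\ell\| = 1$); this reproduces the same expression directly.
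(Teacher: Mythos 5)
Your proof is correct, but it takes a genuinely different route from the paper's. You reduce the statement to Theorem~\ref{thm:subspace for characters}: since $|\chi_\xi(g)|=1$ pointwise, $\|\chi_\xi\|_{L_2(\setA_\tau)}^2 = |\setA_\tau|$ is deterministic, so the ratio of expectations collapses to the average of pointwise ratios, which is exactly the quantity Theorem~\ref{thm:subspace for characters} evaluates; your explicit check that the denominator is deterministic is precisely the point that makes this reduction legitimate. The paper instead argues probabilistically from scratch: it multiplies $\chi_\xi$ by a random phase $e^{j\nu}$ to obtain a zero-mean process, computes its autocorrelation to be $\frac{1}{|\setB|}K_\setB(h^{-1}\circ g)$, i.e.\ $1/|\setB|$ times the kernel of $\calO_{\setA_\tau,\setB}$, and then invokes the Karhunen--Lo\`{e}ve transform to identify the residual energy with the tail eigenvalue sum. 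What the paper's heavier machinery buys is reusability: the KL argument depends only on second-order statistics, so it transfers verbatim to Corollary~\ref{cor:random process}, where the sample paths of the WSS process are not characters and your constant-norm reduction would not apply. What your argument buys is economy: no phase randomization and no KL theorem, with the result exhibited as an immediate corollary of a theorem already proven. Your alternative direct derivation (expanding in the eigenbasis $\{u_\ell\}$ and integrating $|\widehat u_\ell(\xi)|^2$ over $\setB$ via \eqref{eq:concen freq}) is also sound, and is in fact close in spirit to the computation inside the paper's proof of Theorem~\ref{thm:subspace for characters} itself.
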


The proof of Theorem~\ref{thm:random approach} is given in Appendix \ref{sec:prf random approach}. With this result, we show that in a certain sense, most band-limited signals, when time-limited, are well-approximated by a signal within the subspace $\setU_n$. In particular, the following result  \revise{which generalizes\cite[Theorem 4.1]{DavenportWakin2012CSDPSS}} establishes that band-limited random processes, when time-limited, are in expectation well-approximated.

\begin{cor}\label{cor:random process}
Let $x(g),g\in \setG$ be a zero-mean wide sense stationary random process over the group $\setG$ with power spectrum
\[
P_x(\xi) = \left\{\begin{matrix} \frac{1}{|\setB|}, & \xi \in \setB,\\ 0,& \text{otherwise}.\end{matrix}\right.
\]
Suppose we only observe $x$ over the set $\setA_\tau$. Then we have
\[
\frac{\E\left[\|x - \P_{\setU_n}x\|_{L_2(\setA)}^2 \right]}{\E\left[\|x \|_{L_2(\setA)}^2 \right]} = 1 - \frac{\sum_{\ell=0}^{n-1}\lambda_{\ell}(\calO_{\setA,\setB})}{|\setA||\setB|}.
\]
\end{cor}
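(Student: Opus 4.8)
The plan is to reduce this corollary to Theorem~\ref{thm:random approach} by verifying that the WSS process with the stated flat power spectrum has exactly the same second-order statistics, relative to the basis $\setU_n$, as the single random character treated there. Since both the numerator $\E[\|x-\P_{\setU_n}x\|_{L_2(\setA_\tau)}^2]$ and the denominator $\E[\|x\|_{L_2(\setA_\tau)}^2]$ are purely second-order quantities, this matching of statistics is all that is needed.

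First I would invoke the spectral (Cram\'er) representation of a zero-mean WSS process on the locally compact abelian group $\setG$, writing $x(g)=\int_{\widehat\setG}\chi_\xi(g)\,dZ(\xi)$, where $Z$ has orthogonal increments with $\E[dZ(\xi)\,dZ^*(\xi')]=P_x(\xi)\,\delta(\xi-\xi')\,d\xi$. Because $P_x$ is supported on $\setB$ with the constant value $1/|\setB|$ there, the integral effectively runs over $\setB$ and the increment variance is flat on $\setB$.

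The denominator is then immediate: since $\chi_\xi(e)=1$, the variance $\E[|x(g)|^2]=\int_{\widehat\setG}P_x(\xi)\,d\xi=1$ is constant in $g$, so $\E[\|x\|_{L_2(\setA_\tau)}^2]=\int_{\setA_\tau}\E[|x(g)|^2]\,dg=|\setA_\tau|$. For the numerator, orthogonality of the projection gives $\|x-\P_{\setU_n}x\|_{L_2(\setA_\tau)}^2=\|x\|_{L_2(\setA_\tau)}^2-\sum_{\ell=0}^{n-1}|\langle x,u_\ell\rangle_{L_2(\setA_\tau)}|^2$. Substituting the spectral representation and using orthogonality of increments to annihilate the cross terms yields $\E[|\langle x,u_\ell\rangle_{L_2(\setA_\tau)}|^2]=\frac{1}{|\setB|}\int_{\setB}|\langle\chi_\xi,u_\ell\rangle_{L_2(\setA_\tau)}|^2\,d\xi$. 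Recognizing that $|\langle\chi_\xi,u_\ell\rangle_{L_2(\setA_\tau)}|=|\widehat u_\ell(\xi)|$, the magnitude of the Fourier transform of the time-limited eigenfunction $u_\ell$, and that $u_\ell$ is the $\ell$-th eigenfunction of $\calO_{\setA_\tau,\setB}$, the computation in~\eqref{eq:concen freq} gives $\int_{\setB}|\langle\chi_\xi,u_\ell\rangle_{L_2(\setA_\tau)}|^2\,d\xi=\lambda_\ell(\calO_{\setA_\tau,\setB})$. Summing over $\ell$ and dividing by $|\setA_\tau|$ reproduces the claimed identity $1-\frac{\sum_{\ell=0}^{n-1}\lambda_\ell(\calO_{\setA_\tau,\setB})}{|\setA_\tau||\setB|}$.

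The main obstacle is the rigorous justification of the spectral representation together with the attendant Fubini-type interchanges of expectation with the group and dual-group integrals in this general setting; once these are in place, the orthogonal increments eliminate all cross terms and the expression collapses to the eigenvalue sum. In fact, the cleanest route is to observe directly that $\E[\|x\|_{L_2(\setA_\tau)}^2]$ and each $\E[|\langle x,u_\ell\rangle_{L_2(\setA_\tau)}|^2]$ coincide exactly with their counterparts in Theorem~\ref{thm:random approach}, where $\xi$ is drawn uniformly on $\setB$, so that the corollary follows as an immediate consequence of that theorem.
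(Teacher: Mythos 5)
Your proposal is correct and follows essentially the same route as the paper: the paper likewise reduces the corollary to Theorem~\ref{thm:random approach} by noting that the inverse Fourier transform of $P_x(\xi)$ is $\frac{1}{|\setB|}K_\setB(g)$, so the process $x$ has exactly the autocorrelation structure \eqref{eq:R} of the random character $r$, and then computes $\E\left[\|x\|_{L_2(\setA_\tau)}^2\right] = |\setA_\tau|$. Your invocation of the Cram\'er spectral representation is just a more explicit formalization of this second-order matching, and your closing observation (that each $\E\left[|\langle x,u_\ell\rangle|^2\right]$ coincides with its counterpart for $\xi$ uniform on $\setB$) is precisely the paper's argument.
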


As in our discussion following Theorem~\ref{thm:limiting operator}, the term $1 - \frac{\sum_{\ell=0}^{n-1}\lambda_{\ell}(\calO_{\setA,\setB})}{|\setA||\setB|}$ appearing in Theorem \ref{thm:random approach} and Corollary~\ref{cor:random process} can be very small when we choose $n$ slightly larger than $|\setA||\setB|$. This suggests that in a probabilistic sense, most band-limited functions, when time-limited, will be well-approximated by a small number of eigenfunctions of the operator $\calO_{\setA,\setB}$.


\subsection{Applications in the common time and frequency domains}
\label{sec:tfappreview}

We now review several applications involving the time-frequency limiting operator $\calO_{\setA,\setB}$ in the common time and frequency domains, where the eigenfunctions correspond to DPSSs, PSWFs, and PDPSSs.

It follows from \eqref{eq:concen freq} that, among all the functions that are time-limited to the set $\setA$, the first eigenfunction $u_0(g)$ is maximally concentrated in the subset $\setB$ of the frequency domain. Motivated by this result, the first DPSS vector is utilized as a filter for super-resolution~\cite{eftekhari2015greed}.
In \cite{thomson1982spectrum}, the first $\approx 2NW $ DPSS vectors are utilized as window functions (a.k.a.\ tapers) for spectral estimation. The multitaper method~\cite{thomson1982spectrum} averages the tapered estimates with the DPSS vectors, and has been used in a variety of scientific applications including  statistical signal analysis~\cite{cox1996spectral}, geophysics and cosmology~\cite{dahlen2008spectral}.
By exploiting the fact that the number of DPSS eigenvalues in the transition region grows like $O(\log N\log{\frac{1}{\epsilon}})$ as in \eqref{eq:nonasymptotic DPSS}, the very recent work \cite{karnik2021thomson} provided nonasymptotic bounds on some statistical properties of the multitaper spectral estimate as well as a fast algorithm for evaluating the estimate.

By exploiting the concentration behavior of the PSWFs in the time and frequency domains (where $\setG = \R$ and $\widehat \setG = \R$), Xiao et al.~\cite{xiao2001prolate} utilized the PSWFs to  numerically construct
quadratures, interpolation and differentiation formulae for band-limited
functions. Gosse~\cite{gosse2013compressed} constructed a PSWF dictionary consisting of the first few PSWFs for recovering smooth functions from random samples.  The connection between time-frequency localization of multiband signals and sampling theory for such signals was investigated in~\cite{Izu2009TimeFrequencyLocalization}. In~\cite{senay2009reconstruction,senay2008compressive}, the authors also considered a PSWF dictionary for reconstruction of electroencephalography (EEG) signals and time-limited signals that are also nearly band-limited from nonuniform samples. Chen and Vaidyanathan~\cite{chen2008mimo} utilized the PSWFs to represent the clutter subspace (and hence mitigate the clutter), facilitating space-time adaptive processing for multiple-input multiple-output (MIMO) radar systems; see also~\cite{yang2013robust,du2016robust}.

DPSSs, the discrete counterpart of PSWFs, also have proved to be useful in numerous signal processing applications since they provide a highly efficient basis for representing sampled band-limited signals. DPSSs can be utilized to find the minimum energy, infinite-length band-limited sequence that extrapolates a given finite vector of samples~\cite{Slepian78DPSS}. In~\cite{zemen2005channelEstim,zemen2007minimum}, Zemen et al. expressed the time-varying subcarrier coefficients in a DPSS basis for estimating time-varying channels in wireless communication systems. A similar idea is also utilized for channel estimation  in  Orthogonal Frequency Division Multiplexing (OFDM) systems~\cite{sreepada2016channel}, for receiver antenna selection~\cite{saleh2012receive}, etc. The modulated DPSSs can also be useful   for mitigating wall clutter and detecting targets behind the wall in through-the-wall radar imaging~\cite{Zhu2015targetDetectDPSS,zhu2016dimensionality},  and for interference cancellation in a wideband compressive radio receiver (WCRR) architecture~\cite{davenport2010wideband}. The performance (such as the detection probability) of the DPSS basis (and other similar bases corresponding to the time-frequency limiting operator $\calO_{\setA,\setB}$) for identifying unresolved targets was recently analyzed in\cite{bosse2018subspace}.
By modulating the baseband DPSS vectors to different frequency bands and then concatenating these dictionaries, one can construct a new dictionary that provides an efficient representation of sampled multiband signals~\cite{{DavenportWakin2012CSDPSS},zhu2015approximating}. Sejdi\'{c} et al.~\cite{SejdicICASSP2008ChannelEstimationDPSS} proposed one such dictionary to provide a sparse representation for fading channels and improve channel estimation accuracy. The multiband modulated DPSS dictionaries have been utilized for the recovery of sampled multiband signals from random measurements~\cite{DavenportWakin2012CSDPSS}, and for the recovery of physiological signals from compressive measurements~\cite{sejdic2012compressive}. Such dictionaries are also utilized for cancelling wall clutter~\cite{AhmadQianAmin2015WallCluterDPSS}.

The periodic DPSSs (PDPSSs, where $\setG = \Z_N$ and $\widehat\setG = \Z_N$) are the finite-length vectors whose discrete Fourier transform (DFT) is most concentrated in a given bandwidth (as appearing in \eqref{eq:set for DFT limiting}). The PDPSSs have been utilized for extrapolation and spectral estimation of periodic discrete-time signals~\cite{jain1981extrapolation}, for limited-angle reconstruction in tomography~\cite{grunbaum1981eigenvectors }, for Fourier extension~\cite{matthysen2016fast}, and in~\cite{hogan2015wavelet}, the bandpass PDPSSs were used as a numerical approximation to the bandpass PSWFs for studying synchrony in sampled EEG signals.

Finally, the eigenvalue concentration behavior in Theorem~\ref{thm:limiting operator} can also be exploited for solving a linear system involving the Toeplitz operator $\calO_{\setA,\setB}$: $y = \calO_{\setA,\setB} x$. Since the operator $\calO_{\setA,\setB}$ has a mass of eigenvalues that are very close to $0$, the system is often solved by using the rank-$K$ pseudoinverse of $\calO_{\setA,\setB}$ where $K\approx |\setA||\setB|$. In the case where the Toeplitz operator is the prolate matrix $\mB_{N,W}$ defined in \eqref{eq:prolate matrix B_NW}, its truncated pseudoinverse is well approximated as the sum of $\mB_{N,W}^*$ (which is equal to $\mB_{N,W}$) and a low-rank matrix~\cite{matthysen2016fast,Karnik2016FAST} since most of the eigenvalues of $\mB_{N,W}^*$ are very close to either  $1$ or $0$. By utilizing the fact that $\mB_{N,W}$ is a Toeplitz matrix and $\mB_{N,W}\vx$ has a fast implementation via the FFT, an efficient method for solving the system $\vy = \mB_{N,W}\vx$ can be developed; such a method has been utilized for linear prediction of band-limited signals based on past samples and the Fourier extension~\cite{matthysen2016fast,Karnik2016FAST}.

\subsection{Questions}
Inspired by the applications listed above, we raise several questions concerning Theorems~\ref{thm:limiting operator} and \ref{thm:subspace for characters}. Following from the two remarks after Theorem \ref{thm:limiting operator}, two natural questions are:
\begin{question}
Can we improve the second term in \eqref{eq:sum of squares}? Furthermore, what nonasymptotic result (like \eqref{eq:nonasymptotic PSWF} for the PSWF eigenvalues and~\eqref{eq:nonasymptotic DPSS} for the DPSS eigenvalues) can we obtain for the number of eigenvalues of the Toeplitz operator $\calO_{\setA,\setB}$ within the transition region $(\epsilon,1-\epsilon)$?
\end{question}

\begin{question}
Can we extend \eqref{eq:eig near 1/2} to the general time-frequency limiting operator $\calO_{\setA,\setB}$?
\end{question}

Another related important question concerns how accurately the subspace spanned by the first $n$ eigenfunctions of $\calT_{\setA}\calB_\setB\calT_{\setA}$ can represent each individual time-limited character $\calT_\setA\chi_\xi$
with $\xi\in\setB$. Theorem \ref{thm:subspace for characters} ensures that  accuracy is guaranteed in the MSE sense if one chooses $n\approx |\setA||\setB|$ such that the sum of the remaining eigenvalues of $\calT_{\setA}\calB_\setB\calT_{\setA}$ is small.
We suspect that a uniform guarantee for each $\calT_\setA\chi_\xi$ can also be obtained since the derivative of $\|\chi_\xi\|_{L_2(\setA)}^2$ is bounded, given a finer result concerning the eigenvalue distribution for $\calT_{\setA}\calB_\setB\calT_{\setA}$.
Using the approach utilized in~\cite{zhu2017fast,zhu2018roast} with a theorem of Bernstein for trigonometric polynomials~\cite{schaeffer1941inequalities}, we can have an approximation guarantee for the DPSS basis in representing each complex exponential $\ve_f:=\begin{bmatrix}e^{j2\pi f 0} & \cdots & e^{j2\pi f(N-1)}\end{bmatrix}^{\T}$ with frequency $f$ inside a band of interest; this provides a non-asymptotic guarantee which improves upon our previous work \cite{zhu2015approximating}.

\begin{thm}\label{thm:fast for sinusoid}(Representation guarantee for pure sinusoids with DPSSs)
Let $N \in \N$ and $W \in (0, \tfrac{1}{2})$ be given. Also let $[\mS]_K$ be an $N\times K$ matrix consisting of  the first $K$ DPSS vectors. Then for any $\eps \in (0,\tfrac{1}{2})$, the orthobasis $[\mS]_K$ satisfies
\[
\frac{\left\|\ve_f - [\mS]_K[\mS]_K^*\ve_f\right\|_2^2}{\|\ve_f\|_2^2} \leq \epsilon
\]
for all $f\in[-W,W]$ with
\[
K = 2NW + O\left(\log (N)\log\left(\frac{1}{\epsilon^2}\right)\right).\]
\end{thm}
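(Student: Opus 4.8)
The plan is to reduce the uniform approximation bound to a statement about a single nonnegative trigonometric polynomial and then use Bernstein's inequality to pass from an averaged (integrated) bound to a pointwise one. Writing $\widehat{s}_\ell(f)=\sum_{n=0}^{N-1}s_\ell[n]e^{-j2\pi fn}$ for the DTFT of the $\ell$-th DPSS vector $s_\ell$ (the $\ell$-th column of $\mS$), we have $|\langle s_\ell,\ve_f\rangle| = |\widehat{s}_\ell(f)|$. Since $[\mS]_K$ is an orthobasis and $\|\ve_f\|_2^2 = N$, completeness of the DPSS basis gives
\[
\frac{\left\|\ve_f - [\mS]_K[\mS]_K^*\ve_f\right\|_2^2}{\|\ve_f\|_2^2} = \frac{1}{N}\sum_{\ell=K}^{N-1}|\widehat{s}_\ell(f)|^2 =: \frac{R(f)}{N}.
\]
I would record the following properties of $R$: (i) $R(f)\ge 0$; (ii) $R$ is a real trigonometric polynomial in $f$ of degree at most $N-1$; (iii) $R(f)\le \sum_{\ell=0}^{N-1}|\widehat{s}_\ell(f)|^2 = N$ by Parseval in the orthonormal DPSS basis; and (iv) by the concentration relation \eqref{eq:concen freq} (equivalently Theorem~\ref{thm:subspace for characters}), $\int_{-W}^{W}|\widehat{s}_\ell(f)|^2\,df = \lambda_\ell(\calO_{\setA_N,\setB})$, so that $\int_{-W}^{W} R(f)\,df = \sum_{\ell=K}^{N-1}\lambda_\ell(\calO_{\setA_N,\setB})$.

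Second, I would convert this integral bound into the desired uniform bound. Let $M := \max_{f\in[-W,W]} R(f)$, attained at some $f^\star$. By Bernstein's inequality for trigonometric polynomials~\cite{schaeffer1941inequalities} together with property (iii), $\|R'\|_\infty \le 2\pi(N-1)\|R\|_\infty \le 2\pi(N-1)N$, so $R(f)\ge M/2$ on an interval of length $\delta := M/(4\pi(N-1)N)$ adjacent to $f^\star$; since $\delta$ is tiny relative to $W$, at least one such side lies inside $[-W,W]$. Integrating,
\[
\sum_{\ell=K}^{N-1}\lambda_\ell(\calO_{\setA_N,\setB}) = \int_{-W}^{W} R(f)\,df \;\ge\; \frac{M}{2}\,\delta \;=\; \frac{M^2}{8\pi(N-1)N},
\]
whence $M\le \sqrt{8\pi}\,N\,\big(\sum_{\ell\ge K}\lambda_\ell\big)^{1/2}$ and the relative error is at most $M/N\le\sqrt{8\pi}\,\big(\sum_{\ell\ge K}\lambda_\ell\big)^{1/2}$. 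The essential trick here is that, although the global maximum of $R$ over the full period can be of order $N$ (the small-eigenvalue DPSSs leak heavily outside $[-W,W]$), property (iii) caps $\|R\|_\infty$ by $N$, which is exactly what makes the Bernstein step usable. The square root is what turns the target error $\epsilon$ into an $\epsilon^2$ requirement on the eigenvalue tail: it suffices to guarantee $\sum_{\ell\ge K}\lambda_\ell \le \epsilon^2/(8\pi)$.

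Finally, I would bound the tail sum of the DPSS eigenvalues. Writing $K_t := \#\{\ell:\lambda_\ell(\calO_{\setA_N,\setB})>t\}$, the transition-region estimate \eqref{eq:nonasymptotic DPSS} together with \eqref{eq:eig near 1/2} gives $K_t \le 2NW + B\log(c/t)$ with $B = O(\log N)$. Taking $K = 2NW + B\log(1/\delta_0)$ and integrating the layer-cake identity $\sum_{\ell\ge K}\lambda_\ell = \int_0^{\lambda_K}(K_t-K)_+\,dt$ yields $\sum_{\ell\ge K}\lambda_\ell = O(B\,\delta_0)$; choosing $\delta_0 \asymp \epsilon^2/B$ makes this at most $\epsilon^2/(8\pi)$ and forces $K = 2NW + O\!\left(\log N\,\log(1/\epsilon^2)\right)$. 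The main obstacle is precisely this last step: the Bernstein argument reduces everything to a tail bound on $\sum_{\ell\ge K}\lambda_\ell$, and obtaining the clean $O(\log N\log(1/\epsilon^2))$ width requires the sharp nonasymptotic control of the DPSS eigenvalues in and beyond the transition region. I would also track the mild regime assumption ($\delta \ll W$, i.e.\ $NW\gtrsim 1$) needed to place the Bernstein interval inside the band.
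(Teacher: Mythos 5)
The paper itself never proves Theorem~\ref{thm:fast for sinusoid}: the theorem is stated with only a one-sentence pointer to the approach of~\cite{zhu2017fast} combined with Bernstein's inequality for trigonometric polynomials~\cite{schaeffer1941inequalities}, so your reconstruction can only be compared against that sketch, and it does follow exactly the intended route. Most of it is sound: the identity $\|\ve_f - [\mS]_K[\mS]_K^*\ve_f\|_2^2 = \sum_{\ell\ge K}|\widehat s_\ell(f)|^2 =: R(f)$, the four properties of $R$ (in particular $\|R\|_\infty \le N$ by Parseval, and $\int_{-W}^{W}R(f)\dif f = \sum_{\ell\ge K}\lambda_\ell(\calO_{\setA_N,\setB})$, the discrete instance of \eqref{eq:concen freq}), the Bernstein localization giving $\max_{f\in[-W,W]}R(f) \le \sqrt{8\pi}\,N\bigl(\sum_{\ell\ge K}\lambda_\ell\bigr)^{1/2}$, and the layer-cake estimate $\sum_{\ell\ge K}\lambda_\ell \le B\delta_0$ for $K = 2NW + B\log(c/\delta_0)$, with $B=O(\log N)$ obtained from \eqref{eq:nonasymptotic DPSS} and the DPSS analogue of \eqref{eq:eig near 1/2}, are all correct. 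The caveat you flag (the Bernstein interval must fit inside the band, i.e.\ $NW\gtrsim 1$) is real but minor: in the complementary case $R\ge M/2$ on all of $[-W,W]$, which yields $M\le\bigl(\sum_{\ell\ge K}\lambda_\ell\bigr)/W$ and needs separate, more delicate handling.

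The genuine gap is in your final sentence. With $B=\Theta(\log N)$, the choice $\delta_0\asymp\epsilon^2/B$ gives
\[
K \;=\; 2NW + B\log\!\bigl(cB/\epsilon^2\bigr) \;=\; 2NW + O\bigl(\log N\,\log\log N\bigr) + O\bigl(\log N\,\log(1/\epsilon^2)\bigr),
\]
and the additive $\log N\log\log N$ term is \emph{not} absorbed by $O(\log N\log(1/\epsilon^2))$ when $\epsilon$ is a fixed constant in $(0,\tfrac12)$; what you have actually proved is therefore weaker than the stated theorem. Moreover, this loss is intrinsic to your reduction, not an artifact of loose constants: by Slepian's asymptotics $\lambda_{2NW+m}\approx e^{-\pi^2 m/\log N}$, so the tail sum is genuinely of order $\log N\cdot\lambda_K$, and demanding $\sum_{\ell\ge K}\lambda_\ell\le\epsilon^2/(8\pi)$ forces $\lambda_K\lesssim\epsilon^2/\log N$, hence $K\ge 2NW + c\log N\bigl(\log\log N+\log(1/\epsilon^2)\bigr)$. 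To close the gap you must either avoid routing the pointwise bound through the full tail sum (a refinement of the Bernstein step), invoke a sharper eigenvalue-tail ingredient than \eqref{eq:nonasymptotic DPSS}, or settle for the bound $K = 2NW + O\bigl(\log N\,\log(\log N/\epsilon^2)\bigr)$, which matches the theorem only in the regime $\log(1/\epsilon)\gtrsim\log\log N$.
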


\begin{question}
More generally, what uniform guarantee can we have for each time-limited character $\calT_\setA\chi_\xi$ in the subspace spanned by the first $n$ eigenfunctions of $\calT_{\setA}\calB_\setB\calT_{\setA}$?
\end{question}

\section*{Acknowledgements}

This work was supported by NSF grants CCF-1409261, CCF-1704204, and CCF-2008460. We gratefully acknowledge Justin Romberg for pointing out a connection between subspace approximations and Toeplitz operators, and Mark Davenport and Santhosh Karnik for valuable discussions. 
\appendix

\section{Proof of Proposition~\ref{prop: n-width}}\label{sec:prf prop n-width}

\begin{proof}[Proof of Proposition~\ref{prop: n-width}]
We have
\begin{align*}
d_n(\setW(\setA,\widehat\phi(\xi))) &= \inf_{\setM_n}\sup_{x\in\setW(\setA,\widehat\phi(\xi))}\inf_{y \in \setM_n}\|x - y\|_{L_2(\setA)}\\
& = \inf_{\setM_n}\sup_{\|\alpha\|\leq 1}\left\|\calA\alpha - \P_{\setM_n}\right\|\\
& = \inf_{\setM_n}\sup_{z\perp\setM_n}\sup_{\|\alpha\|\leq 1}\frac{\left| \left\langle \calA\alpha,z\right\rangle\right|}{\|z\|}\\
& = \inf_{\setM_n}\sup_{z\perp\setM_n}\sup_{\|\alpha\|\leq 1}\frac{\left| \left\langle \alpha,\calA^* z\right\rangle\right|}{\|z\|}\\
& = \inf_{\setM_n}\sup_{z\perp\setM_n}\frac{\left\|\calA^* z\right\|}{\|z\|}\\
& = \inf_{\setM_n}\sup_{z\perp\setM_n}\frac{\sqrt{\left\langle\calA\calA^* z,z\right\rangle}}{\|z\|}\\
& = \sqrt{\lambda_n},
\end{align*}
where the last line follows from the Weyl-Courant minimax theorem.
\end{proof}

\section{Proof of Theorem~\ref{thm:limiting operator}}\label{sec:prf thm limiting operator}
\begin{proof}[Proof of Theorem~\ref{thm:limiting operator}]
We first note that $\chi_\xi(0) = 1$ for all $\xi\in\widehat\setG$. Thus, we have
\begin{align}
\sum_{\ell} \lambda_{\ell}(\calO_{{\setA_\tau},\setB}) = \int_{{\setA_\tau}} K_\setB(0)\dif h = |{\setA_\tau}|\int_\setB \chi_\xi(0)\dif \xi = |{\setA_\tau}| |\setB|,
\label{eq:sum = AB}\end{align}
\revise{where the first equality follows because $\calO_{{\setA_\tau},\setB}$ is a trace class operator \cite{deitmar2015trace}.}
We write the operator $(\calT_{\setA_\tau}\calB_\setB\calT_{\setA_\tau})^2$ as
\begin{align*}
(\calT_{\setA_\tau}\calB_\setB\calT_{\setA_\tau}\calB_\setB\calT_{\setA_\tau} x)(g) &= \int_{\setA_\tau} K_\setB(\widetilde h^{-1}\circ g) \left(\int_{\setA_\tau} K_\setB(h^{-1}\circ\widetilde h)   x(h) \dif h \right) \dif \widetilde h\\
& = \int_{\setA_\tau}  \left(\int_{\setA_\tau} K_\setB(\widetilde h^{-1}\circ g) K_\setB(h^{-1}\circ\widetilde h)  \dif \widetilde h \right) x(h) \dif h.
\end{align*}
Thus,
\begin{align*}
\sum_\ell\lambda^2_{\ell}(\calO_{{\setA_\tau},\setB})
& = \int_{\setA_\tau}  \int_{\setA_\tau} K_\setB(\widetilde h^{-1}\circ h) K_\setB(h^{-1}\circ \widetilde h)  \dif \widetilde h \dif h = \int_{\setA_\tau}  \int_{\setA_\tau} \left|K_\setB(h^{-1}\circ\widetilde h)\right|^2   \dif \widetilde h \dif h,
\end{align*}
where we use the fact that $K_\setB(h^{-1}\circ g) = \int_\setB \chi_\xi(h^{-1}\circ g) \dif \xi = (\int_\setB \chi_\xi(g^{-1}\circ h) \dif \xi)^*$ since $\chi_\xi(-g) = \chi_\xi^*(g)$. Applying the change of variable $\widetilde h = h\circ\overline h$, we obtain
\begin{align}
\sum_\ell\lambda^2_{\ell}(\calO_{{\setA_\tau},\setB}) = \int_{\setA_\tau}  \int_{{\setA_\tau}-h} \left|K_\setB(\overline h)\right|^2   \dif \overline h \dif h =  \int_{\setA_\tau}  \kappa_{{\setA_\tau},\setB}(h) \dif h,
\label{eq:sum of squares KB}\end{align}
where $\kappa_{\set{\setA_\tau},\setB}(h) = \int_{{\setA_\tau}-h}\left|K_\setB(\overline h)\right|^2   \dif \overline h\geq 0$. The function $\kappa_{\set{\setA_\tau},\setB}(h)$ is dominated as
\begin{align*}
\kappa_{\set{\setA_\tau},\setB}(h) \leq &  \int_{\setG}\left|K_\setB(\overline h)\right|^2   \dif \overline h = \int_{\setG}\left|\int_\setB \chi_\xi(\overline h) \dif \xi\right|^2 \dif \overline h = \int_{\widehat\setG} \revise{1_{\xi \in \setB}} \dif \xi =  |\setB|,
\end{align*}
where we use Parseval's theorem \revise{by viewing $\int_\setB \chi_\xi(\overline h) \dif \xi$ as the inverse Fourier transform of a window function supported on $\setB$, i.e. $1_{\xi \in \setB} = \begin{cases}1, &  \xi \in \setB,\\ 0, & \text{otherwise}, \end{cases}$.}

On the other hand, we have $
\lim_{\tau\rightarrow \infty}\kappa_{\set\setA_\tau,\setB}(h) = \int_{\setG}\left|\int_\setB \chi_\xi(\overline h) \dif \xi\right|^2 \dif \overline h  = |\setB|$
for all $h\in\setG$.
It follows that $
\lim_{\tau\rightarrow \infty}\sum_\ell\lambda_{\ell}^2(\calO_{\setA_\tau,\setB}) = \int_{\setA_\tau} |\setB| \dif h = |\setA_\tau||\setB|.$
Thus, we have
\begin{align}
\sum_\ell\lambda^2_{\ell}(\calO_{\setA_\tau,\setB}) = |\setA_\tau||\setB| - o(|\setA_\tau||\setB|).
\label{eq:sum of squars = AB - o}\end{align}
Subtracting~\eqref{eq:sum of squars = AB - o} from \eqref{eq:sum = AB} gives
\begin{align}
\sum_{\ell}  \lambda_{\ell}(\calO_{\setA_\tau,\setB}) \left(1 - \lambda_{\ell}(\calO_{\setA_\tau,\setB})\right) =  o(|\setA_\tau||\setB|).
\label{eq:lambda 1-lambda}\end{align}
Utilizing the fact that $0\leq \lambda_{\ell}(\calO_{\setA_\tau,\setB}) \leq 1$, we have
\[
\revise{\epsilon (1-\epsilon)} \calN(\calO_{\setA_\tau,\setB};(\epsilon,1-\epsilon)) \leq \sum_{\ell}  \lambda_{\ell}(\calO_{\setA_\tau,\setB}) \left(1 - \lambda_{\ell}(\calO_{\setA_\tau,\setB})\right) =  o(|\setA_\tau||\setB|).
\]
On the other hand, \eqref{eq:lambda 1-lambda} also implies that
\begin{align}
\sum_{\ell:\lambda_{\ell}(\calO_{\setA_\tau,\setB})< 1-\epsilon}\epsilon \lambda_{\ell}(\calO_{\setA_\tau,\setB})< \sum_{\ell}  \lambda_{\ell}(\calO_{\setA_\tau,\setB}) \left(1 - \lambda_{\ell}(\calO_{\setA_\tau,\setB})\right) =  o(|\setA_\tau||\setB|).
\label{eq:control tail}\end{align}
Plugging this term into \eqref{eq:sum = AB} gives
\begin{align*}
|\setA_\tau||\setB| = \sum_{\ell} \lambda_{\ell}(\calO_{\setA_\tau,\setB}) &= \sum_{\ell:\lambda_{\ell}(\calO_{\setA_\tau,\setB})\geq 1-\epsilon} \lambda_{\ell}(\calO_{\setA_\tau,\setB}) + \sum_{\ell:\lambda_{\ell}(\calO_{\setA_\tau,\setB}) \revise{<}1-\epsilon}\lambda_{\ell}(\calO_{\setA_\tau,\setB})\\
& = \sum_{\ell:\lambda_{\ell}(\calO_{\setA_\tau,\setB})\geq 1-\epsilon} \lambda_{\ell}(\calO_{\setA_\tau,\setB}) + o(|\setA_\tau||\setB|).
\end{align*}
Similarly, plugging~\eqref{eq:control tail} into \eqref{eq:sum of squars = AB - o} gives
\begin{align*}
|\setA_\tau||\setB| =  \sum_{\ell:\lambda_{\ell}(\calO_{\setA_\tau,\setB})\geq 1-\epsilon} \lambda^2_{\ell}(\calO_{\setA_\tau,\setB}) + o(|\setA_\tau||\setB|).
\end{align*}
Combining the above two equations and the fact that $\lambda_{\ell}(\calO_{\setA_\tau,\setB})\leq 1$, we have
\begin{align}
\sum_{\ell:\lambda_{\ell}(\calO_{\setA_\tau,\setB})\geq 1-\epsilon} \lambda_{\ell}(\calO_{\setA_\tau,\setB}) -  \lambda^2_{\ell}(\calO_{\setA_\tau,\setB}) = o(|\setA_\tau||\setB|).
\label{eq:large eigen sum - sum squares}\end{align}
On one hand, combining \eqref{eq:large eigen sum - sum squares} with
\[
\sum_{\ell:\lambda_{\ell}(\calO_{\setA_\tau,\setB})\geq 1-\epsilon} \lambda_{\ell}(\calO_{\setA_\tau,\setB}) -  \lambda^2_{\ell}(\calO_{\setA_\tau,\setB}) \leq \sum_{\ell:\lambda_{\ell}(\calO_{\setA_\tau,\setB})\geq 1-\epsilon} 1 -  \lambda_{\ell}(\calO_{\setA_\tau,\setB})
\]
gives
\begin{align*}
\calN(\calO_{\setA_\tau,\setB};[1-\epsilon,1]) - \sum_{\ell:\lambda_{\ell}(\calO_{\setA_\tau,\setB})\geq 1-\epsilon} \lambda_{\ell}(\calO_{\setA_\tau,\setB}) = \sum_{\ell:\lambda_{\ell}(\calO_{\setA_\tau,\setB})\geq 1-\epsilon} 1 -  \lambda_{\ell}(\calO_{\setA_\tau,\setB}) \geq o(|\setA_\tau||\setB|),
\end{align*}
which further implies
\[
\calN(\calO_{\setA_\tau,\setB};[1-\epsilon,1]) \geq |\setA_\tau||\setB| - o(|\setA_\tau||\setB|).
\]
On the other hand, using \eqref{eq:large eigen sum - sum squares} and
\[
\sum_{\ell:\lambda_{\ell}(\calO_{\setA_\tau,\setB})\geq 1-\epsilon} \lambda_{\ell}(\calO_{\setA_\tau,\setB}) -  \lambda^2_{\ell}(\calO_{\setA_\tau,\setB}) \geq (1-\epsilon) \sum_{\ell:\lambda_{\ell}(\calO_{\setA_\tau,\setB})\leq 1-\epsilon} 1 -  \lambda_{\ell}(\calO_{\setA_\tau,\setB}), \]
we also have
\[
\calN(\calO_{\setA_\tau,\setB};[1-\epsilon,1]) - \sum_{\ell:\lambda_{\ell}(\calO_{\setA_\tau,\setB})\geq 1-\epsilon} \lambda_{\ell}(\calO_{\setA_\tau,\setB}) = \sum_{\ell:\lambda_{\ell}(\calO_{\setA_\tau,\setB})\geq 1-\epsilon} 1 -  \lambda_{\ell}(\calO_{\setA_\tau,\setB}) \leq o(|\setA_\tau||\setB|),
\]
which further implies $
\calN(\calO_{\setA_\tau,\setB};[1-\epsilon,1]) \leq |\setA_\tau||\setB| + o(|\setA_\tau||\setB|).$
Thus, $
\lim_{\tau\rightarrow \infty}\frac{\calN(\calO_{\setA_\tau,\setB};[1-\epsilon,1])}{|\setA_\tau|} = |\setB|.$

\end{proof}

\section{Proof of Theorem~\ref{thm:subspace for characters}}\label{sec:prf subspace for characters}
\begin{proof}[Proof of Theorem~\ref{thm:subspace for characters}]
We first recall the eigendecompostion of $\calO_{\setA_\tau,\setB} = \sum_{\ell\geq 0}\lambda_\ell u_\ell u_\ell^*$, where $\lambda_\ell$ is short for $\lambda_\ell(\calO_{\setA_\tau,\setB})$. Utilizing the fact that $u_\ell,\ell = 0,1,\ldots$ is a complete orthonormal basis for $L_2(\setA_\tau)$, we rewrite the function in \eqref{eq:best subsapce}:
\begin{align*}
\left\|\chi_\xi(g) - \P_{\setM_n}\chi_\xi(g)\right\|_{L_2(\setA_\tau)}^2&=
\sum_{\ell} \left|\left\langle(I-\P_{\setM_n})\chi_\xi(g),u_\ell(g)\right\rangle_{L_2(\setA_\tau)}\right|^2\\
& = \sum_{\ell} \left\langle \left\langle(I-\P_{\setM_n})\chi_\xi(g)\chi_\xi^*(h),u_\ell^*(h)\right\rangle_{L_2(\setA_\tau)},u_\ell(g)\right\rangle_{L_2(\setA_\tau)}\\
 &= \sum_{\ell} \left\langle \left\langle(I-\P_{\setM_n})\chi_\xi(h^{-1}\circ g),u_\ell^*(h)\right\rangle_{L_2(\setA_\tau)},u_\ell(g)\right\rangle_{L_2(\setA_\tau)}
\end{align*}
where the second equality utilized the fact that $\P_{\setM_n}$ is the orthogonal projector onto the subspace $\setM_n$, and
 $\sum_{\ell} \left\langle \left\langle(I-\P_{\setM_n})\chi_\xi(g)\chi_\xi^*(h),u_\ell^*(h)\right\rangle_{L_2(\setA_\tau)},u_\ell(h)\right\rangle_{L_2(\setA_\tau)}$ is equivalent to the trace of $ (I-\P_{\setM_n}) \chi_\xi (g) \chi_\xi^* (h)$.
Plugging this equation into \eqref{eq:best subsapce} gives
\begin{align*}
\int_{\setB} \|\chi_\xi(g) - \P_{\setM_n}\chi_\xi(g)\|_{L_2(\setA_\tau)}^2 \dif \xi  &= \int_{\setB}\sum_{\ell} \left\langle \left\langle(I-\P_{\setM_n})\chi_\xi(\theta^{-1}\circ g),u_\ell^*(h)\right\rangle_{L_2(\setA_\tau)},u_\ell(g)\right\rangle_{L_2(\setA_\tau)} \dif \xi\\
& = \sum_{\ell} \int_{\setB} \left\langle \left\langle(I-\P_{\setM_n})\chi_\xi(\theta^{-1}\circ g),u_\ell^*(h)\right\rangle_{L_2(\setA_\tau)},u_\ell(g)\right\rangle_{L_2(\setA_\tau)} \dif \xi\\
& =  \sum_{\ell}\left\langle(I-\P_{\setM_n})\calO_{\setA_\tau,\setB}u_\ell,u_\ell\right\rangle_{L_2(\setA_\tau)} = \sum_{\ell}\lambda_\ell\left\langle(I-\P_{\setM_n})u_\ell,u_\ell\right\rangle_{L_2(\setA_\tau)}
\end{align*}
where the second line follows from monotone convergence theorem (since each term inside the summation is nonnegative). Thus, we conclude that the optimal $n$-dimensional subspace which minimizes the last term in the above equation is $\setU_n$ (which is spanned by the first $n$ eigenfunctions). With this choice of subspace and \eqref{eq:sum of squares} that $\sum_{\ell} \lambda_\ell  = |\setA_\tau||\setB|$, we have
\begin{align*}
\int_{\setB} \|\chi_\xi(g) - \P_{\setU_n}\chi_\xi(g)\|_{L_2(\setA_\tau)}^2 \dif \xi & = \sum_{\ell\geq n} \lambda_\ell  = |\setA_\tau||\setB| - \sum_{\ell=0}^{n-1}\lambda_{\ell}
\end{align*}
The proof is completed by noting that $\|\chi_\xi(g) \|_{L_2(\setA_\tau)}^2 = |\setA_\tau|$ for any $\xi\in \setB$.
\end{proof}

\section{Proof of Theorem~\ref{thm:random approach}}\label{sec:prf random approach}
\begin{proof}[Proof of Theorem~\ref{thm:random approach}]
First let $\nu$ be a random variable with uniform distribution on $[0,2\pi)$. We define the random vector
\[
r(g) = r(g;\xi,\nu) = \chi_\xi(g)e^{j\nu},
\]
where the term $e^{j\nu}$ acts as a phase randomizer and ensures that $r$ is zero-mean:
\begin{align*}
\E\left[r(g)\right] = \frac{1}{|\setB|2\pi}\int_\setB \chi_\xi(g)e^{j\nu} \dif \xi \dif \nu = \frac{1}{|\setB|2\pi}\int_\setB\chi_\xi(g) \dif \xi \int_0^{2\pi} e^{j\nu} \dif \nu = 0
\end{align*}
for all $g\in \setA_\tau$.

Now we compute the autocorrelation $R$ of the random variable $r$ as
\begin{equation}\begin{split}
R(g,h) &= \E\left[r(g) r^*(h) \right] = \E\left[\left(\chi_\xi(g)e^{j\nu}\right) \left(\chi_\xi^*(h)e^{-j\nu}\right) \right] = \E\left[\chi_\xi(h^{-1}\circ g) \right]\\
& = \frac{1}{|\setB|}\int_\setB \chi_\xi(h^{-1}\circ g) \dif \xi = \frac{1}{|\setB|}K_\setB(h^{-1}\circ g)
\end{split}\label{eq:R}\end{equation}
for all $h,g\in \setA_\tau$. Here $K_\setB$ is defined in \eqref{eq:def KB}. Note that $K_\setB(h^{-1}\circ g)$ with $h,g\in \setA_\tau$ is the kernel of the Toeplitz operator $\calO_{\setA_\tau,\setB}$.
Now it follows from the Karhunen-Lo\`{e}ve (KL) transfrom~\cite{stark1986probability} that
\[
\E\left[\|r - \P_{\setU_n}r\|_{L_2(\setA_\tau)}^2 \right] = \frac{1}{|\setB|} \sum_{\ell\geq n}\lambda_{\ell}(\calO_{\setA_\tau,\setB})= |\setB| - \sum_{\ell=0}^{n-1}\lambda_{\ell}(\calO_{\setA_\tau,\setB}).
\]

We then compute the expectation for the energy of $r$ as
\[
\E\left[\|r \|_{L_2(\setA_\tau)}^2 \right] = \frac{1}{|\setB|}\frac{1}{2\pi}\int_\setB |\chi_\xi(g)e^{j\nu}|^2 \dif \xi \dif \nu = |\setA_\tau|.
\]
The proof is completed by noting that $\E\left[\|r - \P_{\setU_n}r\|_{L_2(\setA_\tau)}^2 \right] = \E\left[\|x - \P_{\setU_n}x\|_{L_2(\setA_\tau)}^2 \right]$ and $\E\left[\|r \|_{L_2(\setA_\tau)}^2 \right] = \E\left[\|x \|_{L_2(\setA_\tau)}^2 \right]$.
\end{proof}

\bibliographystyle{abbrv}
\bibliography{PhDThesis}

\end{document}